\def\b{\textbf{b}}
\def\R{\mathbb{R}}
\newcommand{\bmat}{\left[\begin{matrix}}
\newcommand{\emat}{\end{matrix}\right]}
\newtheorem{theorem}{Theorem}
\newtheorem{proposition}[theorem]{Proposition}
\newtheorem{conjecture}[theorem]{Conjecture}
\theoremstyle{remark}
\newtheorem*{remark}{Remark}
\theoremstyle{definition}
\newcommand{\Z}{\mathbb{Z}}
\newcommand{\vb}{{\mathbf b}}
    \title{A physical study of the LLL algorithm}
    \author{Jintai Ding, Seungki Kim, Tsuyoshi Takagi, Yuntao Wang, Bo-yin Yang}
\begin{document}

\begin{abstract}

This paper presents a study of the LLL algorithm from the perspective of statistical physics. Based on our experimental and theoretical results, we suggest that interpreting LLL as a sandpile model may help understand much of its mysterious behavior. In the language of physics, our work presents evidence that LLL and certain 1-d sandpile models with simpler toppling rules belong to the same universality class.

This paper consists of three parts. First, we introduce sandpile models whose statistics imitate those of LLL with compelling accuracy, which leads to the idea that there must exist a meaningful connection between the two. Indeed, on those sandpile models, we are able to prove the analogues of some of the most desired statements for LLL, such as the existence of the gap between the theoretical and the experimental RHF bounds. Furthermore, we test the formulas from finite-size scaling theory (FSS) against the LLL algorithm itself, and find that they are in excellent agreement. This in particular explains and refines the geometric series assumption (GSA), and allows one to extrapolate various quantities of interest to the dimension limit. In particular, we obtain the estimate that the empirical average RHF converges to $\approx 1.02265$ as the dimension goes to infinity.

\end{abstract}

\maketitle

\section{Introduction}

\subsection{The mysteries of LLL}

The LLL algorithm \cite{LLL82} is one of the most celebrated algorithmic inventions of the twentieth century, with countless applications to pure and computational number theory, computational science, and cryptography. It is also the most fundamental of lattice reduction algorithms, in that nearly all known reduction algorithms are generalizations of LLL in some sense, and they also utilize LLL as their subroutine. (We refer the reader to \cite{NV10} for a thorough survey on LLL and these related topics.) Thus it is rather curious that many of the salient features of LLL in practice is left totally unexplained, not even in a heuristic, speculative sense, even to this day.

The most well-known among the mysteries of LLL is the gap between its worst-case root Hermite factor(RHF) and the observed average-case, as documented in Nguyen and Stehl\'e \cite{NS06}. It is a theorem from the original LLL paper \cite{LLL82} that the shortest vector of an LLL-reduced basis (in the theoretical sense) in dimension $n$, with its determinant normalized to $1$, has length at most $(4/3)^{\frac{n-1}{4}} \approx 1.075^n$, whereas in practice one almost always observes $\approx 1.02^n$, regardless of the way in which the input is sampled. This is a strange phenomenon in the light of the works of Kim \cite{K15} and Kim and Venkatesh \cite{KV17}, which provide experimental and theoretical evidence that, for almost every lattice, nearly all of its LLL bases have RHF close to the worst bound. It is as though the algorithm is consciously dodging those plethora of inferior bases every time it is run. This leads to the suspicion that LLL must be operating in a complex manner that belies the simplicity of its code.

There are also many other LLL phenomena that remain unaccounted for. One is the geometric series assumption (GSA), originally proposed by Schnorr \cite{S03}, and its partial failure at the boundaries, both of which are observed in other blockwise reduction algorithms as well e.g. BKZ \cite{SE94}. Despite being an indispensable component of numerous cryptanalyses of lattice-based systems (e.g. see \cite{DY17}, \cite{BSW18}), the current understanding of GSA is not much better than that of the RHF gap problem above: not even a heuristic explanation, or a precise formulation, only vague empirical observations. There are also questions raised regarding the time complexity of LLL. Nguyen and Stehl\'e \cite{NS06} suggest that, in most practical situations, the average time complexity is much lower than the worst-case, suggesting that there may be the average-worst case gap phenomenon here as well. The complexity of the optimal LLL algorithm --- i.e. where the parameter $\delta$ equals $1$ --- is not proven to be polynomial-time, although observations suggest that it is (see Akhavi \cite{A00} and references therein).

This lack of understanding of the practical behavior of LLL --- and reduction algorithms in general --- may incur a hefty price, especially when it comes to cryptographic applications. To put it somewhat bluntly: simply by running LLL, we managed to ``improve'' the RHF of LLL from $1.075$ to $1.02$; what keeps one from entertaining the possibility that a cheap trick might improve it further to, say, $1.005$, and thereby cripple all lattice-based cryptosystems? As unrealistic --- and perhaps even outrageous --- as this may sound, our current understanding of reduction algorithms is severely unequipped to address this question.

\subsection{This paper}

The theme of the present paper is that statistical physics may enable a scientific approach to the empirical behavior of the LLL algorithm, by studying it as a kind of a sandpile model. As demonstrated throughout this paper, for each LLL phenomenon, there is a corresponding sandpile phenomenon, most of which are either already familiar to physicists or captured by well-known methods in physics. Some aspects of our work seem to present challenges to physics, and we hope those will motivate rich and fruitful interdisciplinary interactions revolving around the LLL algorithm, and lattice reduction algorithms in general.


In Section 2, we justify this perspective by presenting stochastic sandpile models that are both impressively close to LLL and mathematically accessible. Specifically, we propose two models of LLL, which we name \emph{LLL-SP} and \emph{SSP} respectively. LLL-SP (Algorithm \ref{alg:lllsp} below) is a non-Abelian stochastic model that exhibits nearly identical quantitative behavior to that of LLL in numerous aspects, both in terms of their output statistics such as the distribution of RHF, and their dynamics. This provides compelling evidence that the two algorithms operate under the same principles, or put it formally, that they are in the same universality class. SSP (Algorithm \ref{alg:ssp}) is an Abelian stochastic model that is mathematically far more tractable than LLL-SP, and still imitates the most important aspects of the output statistics of LLL.

In Sections 3 and 4, we prove on these models some of the most desired statements regarding LLL. On the RHF gap phenomenon, we have the following

\begin{theorem} \label{thm:intro_ssprhf}
In all sufficiently large system sizes (which corresponds to the lattice dimensions for LLL), there exists a gap between the worst-case and the average-case RHFs of SSP.
\end{theorem}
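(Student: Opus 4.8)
The plan is to translate the root Hermite factor into the natural height variables of SSP and then compare the unique extremal (worst-case) configuration against the stationary expectation. Writing $x_i$ for the height variables of SSP (the analogue of $\log\|b_i^*\|$ in LLL) and normalizing the determinant so that $\sum_{i=1}^n x_i = 0$, the root Hermite factor $\gamma$ satisfies $\log \gamma = x_1/n$. Introducing the slope variables $s_j = x_j - x_{j+1}$, a summation by parts gives
\[
x_1 = \frac{1}{n}\sum_{j=1}^{n-1}(n-j)\,s_j,
\]
so $\log\gamma$ is a boundary-weighted average of the slopes, with the slopes nearest the $b_1$-end carrying the most weight. In SSP the slopes are exactly the local height differences that the toppling rule keeps below a fixed threshold $\alpha = \alpha(\delta)$, so the worst-case RHF is realized by the maximally tilted stable configuration $s_j \equiv \alpha$, giving $\log\gamma_{\mathrm{worst}} = \alpha(n-1)/(2n) \to \alpha/2$.

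First I would pin this worst-case value down rigorously: among all stable SSP configurations the quantity $x_1$ is maximized exactly when every slope saturates the threshold, which follows from the linear-programming structure of the constraints $s_j \le \alpha$ together with the determinant constraint $\sum_i x_i = 0$. Next I would invoke the Abelian structure of SSP, which I expect to furnish an explicit description of the recurrent configurations (e.g.\ via a burning-type criterion) and hence of the stationary measure. The aim of this step is to compute, or at least bound from above, the stationary mean slope $\mu := \lim_{n\to\infty}\mathbb{E}[s_j]$ in the bulk and to show $\mu < \alpha$ strictly.

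The gap then follows by feeding the slope statistics into the boundary-weighted sum: if a positive density of slopes sits below the threshold by a fixed amount in the stationary state, then $\mathbb{E}[x_1] \le \mu'(n-1)/2$ for some $\mu' < \alpha$, whence $\log\gamma_{\mathrm{avg}} = \mathbb{E}[x_1]/n$ is bounded away from $\log\gamma_{\mathrm{worst}}$ uniformly for all large $n$. The crux, and the step I expect to be the main obstacle, is establishing that this deficit is $\Omega(n)$ rather than $O(1)$: because the weight $(n-j)$ concentrates near the boundary $j=1$, I must rule out the degenerate scenario in which the stationary slopes relax to the threshold everywhere except on a vanishing fraction of sites near the boundary, precisely where GSA is known to fail. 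Controlling the stationary measure near the boundary, and proving that a constant fraction of slopes remain strictly sub-threshold uniformly in the system size, is where the real work lies; I anticipate this requires either an exact product-form or transfer-matrix description of the stationary distribution afforded by the Abelian property, or a large-deviations estimate showing that the fully saturated configuration carries exponentially small stationary mass.
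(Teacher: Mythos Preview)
Your setup is correct and matches the paper: the identity $x_1 = \frac{1}{n}\sum_{j=1}^{n-1}(n-j)s_j$ recovers the paper's formula \eqref{eq:rhf}, and the worst case $s_j\equiv T$ (their threshold) gives $\log\gamma_{\mathrm{worst}} = T/2 + o_n(1)$, exactly as in Theorem~\ref{prop:ssprhf}. Where you diverge from the paper is in the mechanism for bounding the average.

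Your plan is a first-moment argument on the individual slopes: show the stationary mean slope $\mu$ in the bulk is strictly below $T$, then feed this into the weighted sum. As you yourself flag, the obstacle is that the weights $(n-j)$ concentrate at the boundary, so you would need uniform control on $\mathbb{E}[s_j]$ for $j$ near $1$, precisely where the bulk computation does not apply. The paper does \emph{not} attempt this. Instead it proves a pointwise upper bound on the density of the stationary measure: the steady state is shown to have maximum point mass at most roughly $(I/2)^{-(n-1)}$, obtained by tracking how the toppling operators $T_1,\ldots,T_{n-1}$ push a parallelepiped-shaped distribution through configuration space and diagonalizing their joint action. One then simply counts the stable configurations with $\log\mathrm{RHF} > \alpha$ and chooses $\alpha$ so that this count times $(I/2)^{-(n-1)}$ tends to zero; this yields $\alpha = T/2 - I/2e^2$.

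The virtue of the paper's route is that it completely bypasses the boundary issue you are worried about: a uniform density bound on the \emph{whole} stationary measure needs no information about marginals at any particular site. In effect, the paper executes the second of the two fallbacks you list in your final sentence (``a large-deviations estimate showing that the fully saturated configuration carries exponentially small stationary mass''), but does so via an explicit density bound rather than an LDP. Your primary strategy --- controlling $\mathbb{E}[s_j]$ site by site --- is not what the paper does, and as you anticipated, would require additional boundary analysis that the density-plus-counting argument renders unnecessary.
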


Theorem \ref{prop:ssprhf} below provides a more precise quantitative statement, after the necessary definitions are set up. We mention that the mathematical study of SSP and the proof of this theorem are announced in the companion paper \cite{Kprep}, separated from the present paper in order for consideration in a purely physical context. Hence Section 3, where we introduce Theorem \ref{thm:intro_ssprhf}, is expository, included for the completeness of the presentation of our perspective on LLL. We expect that a key idea in the proof of Theorem \ref{thm:intro_ssprhf} can be extended to yield the same result for LLL-SP; see Conjecture \ref{conj:mass}.

We are able to prove some fairly strong statements regarding the time complexity of LLL-SP (which also applies to SSP):

\begin{theorem} \label{thm:intro_time}
Choose an input basis $\{\mathbf{b}_1, \ldots, \mathbf{b}_n\} \subseteq \R^n$, and let $E = n^2 \log \max_i\|\mathbf{b}_i\|$. Then
\begin{itemize}
\item (Lower bound on complexity) There exists a constant $C$ such that, with probability $1 - CE^{-1/2}$, LLL-SP takes at least $E/4$ swaps to terminate.
\item (Polynomial-time complexity of the optimal LLL) With probability $1 - \eta$, the optimal LLL-SP --- that is, with the maximal $\delta$ parameter --- terminates within $O_\eta(E)$ swaps.
\end{itemize}
\end{theorem}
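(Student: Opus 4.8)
The plan is to control the standard LLL potential throughout the run and to reduce both statements to estimates on its expected decrease per swap. Write $x_i^{(t)} = \log\|\mathbf{b}_i^*\|^2$ for the (log) squared Gram--Schmidt norms after $t$ swaps, set $d_i = \prod_{j\le i}\|\mathbf{b}_j^*\|^2$, and let $\Phi_t = \sum_{i=1}^{n-1}\log d_i^{(t)} = \sum_{i=1}^{n-1}(n-i)\,x_i^{(t)}$. For an integer input one has $0 \le \Phi_t$ and, by Hadamard's inequality, $\Phi_0 \le E$ (up to the normalisation of the logarithm). Because a swap at site $i$ exchanges $\mathbf{b}_i,\mathbf{b}_{i+1}$ while preserving $x_i + x_{i+1}$, it affects $\Phi$ only through $x_i$, and the per-swap decrease equals $-\log\!\big(e^{-t_i}+\mu^2\big) > 0$, where $t_i = x_i^{(t)} - x_{i+1}^{(t)}$ and $\mu$ is the (random) size-reduction coefficient driving LLL-SP. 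Thus everything comes down to bounding the conditional law of this decrease, from above for the first bullet and from below for the second.

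For the lower bound I would first observe that, although a single swap can deplete $\Phi$ by a large amount when $\mu$ is atypically small, the decrease is dominated by $-\log\mu^2 = -2\log|\mu|$, whose expectation $c_+ := \mathbb{E}[-2\log|\mu|]$ is finite (and, for the uniform coefficient law, is the explicit constant $2+2\log 2 < 4$). Hence $\mathbb{E}[\Phi_t - \Phi_{t+1}\mid \mathcal{F}_t] \le c_+$, so $W_t = \Phi_{t\wedge T} + c_+(t\wedge T)$ is a submartingale; optional stopping gives $\mathbb{E}[T] \ge (\Phi_0 - \mathbb{E}[\Phi_T])/c_+$, which is at least $E/4$ once $\Phi_0$ is of order $E$ (the regime in which the bound is meaningful), the inequality $c_+<4$ being exactly what produces the clean constant. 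To upgrade to high probability I would apply a second-moment/Doob maximal estimate to the compensated sum $\sum_{s\le t}\big(\text{decrease}_s - \mathbb{E}[\text{decrease}_s\mid\mathcal{F}_{s-1}]\big)$, whose per-step conditional variance is finite because $(\log|\mu|)^2$ is integrable; comparing the $\Theta(E)$ budget that must be depleted against the $\Theta(E)$ mean over $E/4$ steps yields a deviation bound of the stated order $O(E^{-1/2})$.

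For the polynomial-time bound on the optimal ($\delta = 1$) LLL-SP, the target is the expectation estimate $\mathbb{E}[T] = O(E)$, after which Markov's inequality immediately gives termination within $(K/\eta)E = O_\eta(E)$ swaps with probability $1-\eta$. By the same potential identity it suffices to show that each swap decreases $\Phi$ by at least a fixed constant $c_- > 0$ in conditional expectation, whence $\mathbb{E}[T] \le \Phi_0/c_- \le E/c_-$. Here lies the main obstacle, and it is precisely the stochastic avatar of the classical difficulty that LLL with $\delta=1$ has no guaranteed per-swap decrease: the conditional expected decrease $\mathbb{E}[-\log(e^{-t_i}+\mu^2)\mid \text{swap at }i]$ degenerates to $0$ as $t_i \to 0^+$, since a swap at a nearly balanced site forces $\mu$ into a vanishing window and moves the configuration only infinitesimally.

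To overcome this I would not rely on $\Phi$ alone but instead control the ``marginal'' swaps separately. The plan is to split the run into productive swaps, those with decrease at least a threshold $\theta$, of which there are at most $\Phi_0/\theta = O(E/\theta)$, and marginal swaps, those with decrease below $\theta$; a marginal swap requires $\mu^2$ to lie in an interval of width $O(\theta)$ straddling $1 - e^{-t_i}$, so it is both rare per attempt and leaves $\Phi$ essentially unchanged. Bounding the expected number of marginal swaps --- equivalently, how often the dynamics returns to near-balanced local configurations --- is the crux, and I expect it to require either a corrected Lyapunov function $\Phi + \lambda\sum_i f(t_i)$ that charges near-balance, or a comparison with the Abelian model SSP, whose odometer (and hence total number of topplings) is order-independent and directly $O(E)$. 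This marginal-swap estimate, where the non-Abelian character of LLL-SP is felt most acutely, is the step I expect to be genuinely hard and to demand an idea beyond the classical potential argument.
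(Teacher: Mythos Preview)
Your treatment of the lower bound is essentially the paper's argument. Both bound the per-swap energy drop from above by $-2\log|\mu_{k(i)}|$ and then use concentration for a sum of i.i.d.\ terms with finite mean $2(1+\log 2)<4$ and finite variance. The paper applies Berry--Esseen directly to $F_N=\sum_{i\le N}(-2\log|\mu_{k(i)}|)$, obtaining the $O(E^{-1/2})$ failure probability in one stroke; your martingale/Doob route reaches the same conclusion but is slightly less direct, since here the increments really are i.i.d.\ (the $\mu$'s in LLL-SP are resampled fresh), so there is no need to pass through a compensated martingale.

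For the second bullet your plan goes off track because of a misreading of the model. Throughout Sections~2--4 the paper works with the \emph{Siegel} variant: the Lov\'asz test is $\delta\|\mathbf b_k^*\|^2>\|\mathbf b_{k+1}^*\|^2$, with parameter range $\delta\le 3/4$; ``optimal'' means $\delta=3/4$, not $\delta=1$. Two consequences follow. First, the swap condition is $r_k>T=-\tfrac12\log\delta>0$ and does \emph{not} involve $\mu$, so conditioning on ``swap at $k$'' imposes no constraint on $\mu_k$: it remains uniform on $[-1/2,1/2]$, independent of the configuration. Your sentence ``a swap at a nearly balanced site forces $\mu$ into a vanishing window'' is therefore false for LLL-SP. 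Second, because $e^{-2r_k}<\delta=3/4$ whenever a swap is performed, the energy drop satisfies
\[
2\log Q_k \;=\; -\log\bigl(e^{-2r_k}+\mu_k^2\bigr) \;\ge\; -\log\bigl(3/4+\mu_k^2\bigr),
\]
a lower bound depending on $\mu_k$ alone. There is no degeneration ``as $t_i\to 0^+$'' because $t_i$ never gets below the fixed threshold $\log(4/3)$.

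With this correction, your \emph{first} instinct (before you abandoned it) already works, and it is exactly what the paper does: fix $\varepsilon>0$, note that $|\mu|\le(1-\varepsilon)/2$ with probability $1-\varepsilon$, in which case the drop is at least $d=-\log\bigl(3/4+(1-\varepsilon)^2/4\bigr)>0$; over $10E/d$ i.i.d.\ trials, a binomial tail bound guarantees at least $E/d$ good swaps with probability as close to $1$ as desired, forcing termination. No corrected Lyapunov function, no marginal-swap accounting, and no comparison with SSP is needed. Equivalently, one can simply take $c_-=\mathbb E\bigl[-\log(3/4+\mu^2)\bigr]>0$ and run your Markov argument; the obstacle you identified does not exist in this model.
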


See Theorems \ref{thm:lowertime} and \ref{thm:optimal} for precise statements. The lower bound is of particular interest from the cryptographic perspective, since it sets a certain limit on the strength of lattice reduction algorithms. We expect that this result is also valid for LLL assuming a certain conjecture on its dynamical property that is well-supported by our experiments; see Conjecture \ref{conj:mu} below.

In Section 5, we further develop the connection between LLL and sandpile models by ``applying'' finite-size scaling theory (FSS) to LLL. FSS is a theory in physics that studies critical phase transitions, such as water freezing into ice, and metals being magnetized. Although there is no critical phenomenon to discuss for LLL, the analogy with sandpile models motivates us to investigate if some observables in LLL scale with dimension in a similar way to what is seen in physics in finite-size scaling theory of critical phenomena.

Denote by $y_n$ the natural log of the ``average RHF'' of LLL in dimension $n$, and $y_\infty := \lim_{n \rightarrow \infty} y_n$. Also, for a (LLL-reduced) basis $\mathcal{B} = \{\mathbf{b}_1, \ldots, \mathbf{b}_n\}$ and its Gram-Schmidt orthogonalization $\{\mathbf{b}_1^*, \ldots, \mathbf{b}_n^*\}$, write $r(i) = \log \|\mathbf{b}_i^*\|/\|\mathbf{b}_{i+1}^*\|$. Then the formulas from FSS that would normally apply to (Abelian) sandpiles translate to the following for LLL: there exists a single constant $\sigma$ such that

\begin{enumerate}[(i)]
\item $y_\infty = y_n + \frac{D}{n^\sigma} + \mbox{(smaller errors)}$, for some constant $D$.
\item $\mathrm{Var}(y_n) \sim n^{-2\sigma}$.
\item $2y_\infty - \mathbb{E}(r(i)) \sim i^{-\sigma} \mbox{ or } (n-i)^{-\sigma}$, depending on whether $i$ is near $1$ or $n-1$.
\end{enumerate}

All three statements are clearly interesting: (i) and (ii) are self-explanatory, and (iii) provides the correct formulation of the GSA (which says that $r(i)$ are nearly constant) and its partial failures near the boundaries.
Our data on dimensions up to $300$ --- summarized in Tables \ref{table:1st} and \ref{table:2nd}, and Figures \ref{fig:1st}-\ref{fig:3rd_b} below --- fit robustly with all of the above formulas with $\sigma \approx 0.75$. Accordingly, we obtain a numerical estimate
\begin{equation}\label{eq:FSSprediction}
 \mbox{(average RHF of LLL)} \rightarrow 1.02265\ldots, \mbox{ as $n \rightarrow \infty$.}
\end{equation}

It may be of interest that Grassberger, Dhar, and Mohanty \cite{GDM16} numerically obtained the same value of $\sigma \approx 0.75$ for a sandpile model with a very different toppling rule. In physics, different systems with the same critical exponents (such as $\sigma$ here) that govern their behavior in the system size limit are said to belong to the same \emph{universality class}. It is expected that there exist not too many distinct universality classes.

There exists some subtlety regarding (iii), arising from the fact that LLL is non-Abelian as a sandpile model. It does hold on one end with $\sigma \approx 0.75$ for the first 8-10 values of $i$, but on the other end, it holds with a different exponent $\approx 1.05$. At this point, we do not know how to explain this phenomenon in a satisfactory manner; it could be the size of our data --- which is quite large from the lattice reduction perspective, but tiny from the physical one --- or the authors' shortcomings in physics. At the very least, we obtain a neat extrapolation of $\mathbb{E}(r(i))$ on both ends, which has been of some recent cryptographic interest (see \cite{BSW18}, \cite{DY17}).



\subsection{Comparison with previous works}

This paper is not the first to compare LLL, and blockwise reduction algorithms in general, to a sandpile model. The formal similarity seems to have been first noticed in Madritsch and Vall\'ee \cite{MV10} --- see also Vall\'ee \cite{V16}. This idea was and is being more vigorously applied to the simulation of BKZ, the algorithm used in practice to challenge lattice-based cryptosystems that may be viewed as a generalization of LLL. We refer the readers to \cite{CN11}, \cite{HPS11}, and the more recent \cite{BSW18} for examples.

The present work most importantly differs in motivation from the above-mentioned works, and other related works in the cryptographic literature. In cryptography, often the goal is to craft what is called a \emph{simulator} of BKZ, an algorithm of very small temporal and spatial complexity that aids the practitioners in predicting the outcome of BKZ, with a particular interest in the RHF and the output profile. On the other hand, our goal is to search for a scientific theory that matches the observed behavior of LLL. It is one of our hopes that our work serves as a contribution to the construction of a better simulator, but we do not claim to be part of that competition.

This difference in our motivation is what leads us to investigate LLL in ways that have not been tried in the previous works, which are nearly exclusively focused on cryptographic applications. We subject our models to far more severe challenges --- running tens of thousands of tests, applying tweaks, comparing more observables than just the RHF --- than is done for the simulators. We do come up with a high-quality simulator of LLL as a result, yet that is the bare minimum necessity, not a sufficiency, to convince anyone that LLL may be governed by the laws of statistical physics, like the sandpile models are. Furthermore, adopting the well-developed ideas of physics such as the operator algebra method (Sections 3 and 4), and finite-size scaling theory (Section 5), we question some of the statements that have often been taken for granted, such as whether the number $1.02$ is not a mere anomaly of the small dimensions, and whether the GSA is really the ideal description of the output shape of LLL. 

We again stress that we are not pitting our work against the literature on BKZ simulators, and ask the reader to avoid the mistake of the same kind. Rather, we hope our work to be understood as an attempt to see LLL under a different light. Yes, LLL has been viewed as a sandpile model in the sense of an algorithm, but it has never been viewed as a sandpile model in the sense of an object subject to the principles of statistical mechanics. In that aspect our work is the first of its kind.


\subsection{Assumptions and notations}

In Sections 2-4, instead of the original LLL reduction from \cite{LLL82}, we work with its Siegel variant, a slight simplification of LLL. The Siegel reduction shares with LLL all the same qualitative features, but easier to handle theoretically, making it a reasonable starting point for our study. However, in Section 5 (the section on FSS), we revert to the original LLL, since it would be more interesting to extrapolate its RHF than that of the Siegel variant. Either way, our numerous smaller experiments suggest that the choice of LLL or Siegel affect the outcomes marginally at most.

The integer $n$ always represents the dimension of the relevant Euclidean space. Our lattices in $\mathbb{R}^n$ always have full rank. A basis $\mathcal{B}$, besides its usual definition, is an \emph{ordered} set, and we refer to its $i$-th element as $\vb_i$. Denote by $\vb_i^*$ the component of $\vb_i$ orthogonal to all vectors preceding it, i.e. $\vb_1, \ldots, \vb_{i-1}$. Also, for $i > j$, define $\mu_{i,j} := \langle \vb_i, \vb_j^* \rangle / \langle \vb_j^*, \vb_j^* \rangle$. Thus the following equality holds in general:
\begin{equation} \label{eq:gso_decomp}
\vb_i = \vb_i^* + \sum_{j=1}^{i-1} \mu_{i,j}\vb_j^*.
\end{equation}
We say $\mathcal B$ is \emph{size-reduced} if all $|\mu_{i,j}| \leq 0.5$. One can \emph{size-reduce} any basis $\mathcal B$, i.e. turn it into a size-reduced basis, by the following simple algorithm: for $j = n-1$ to $1$, and for each $j < i \leq n$, add or subtract $\vb_{j}$ from $\vb_i$ repeatedly until $|\mu_{i,j}| \leq 0.5$ holds (in computations, one sometimes allows $\mu_{i,j}$ to be slightly greater than $0.5$ in order to avoid floating-point errors). One can check using \eqref{eq:gso_decomp} that this procedure indeed produces a size-reduced basis.

We will write for shorthand $\alpha_i := \|\vb_i^*\| / \|\vb_{i+1}^*\|$, and $Q_i := (\alpha_i^{-2} + \mu_{i+1,i}^2)^{-1/2}$. When discussing lattices, $r_i := \log \alpha_i$, and when discussing sandpiles, $r_i$ refers to the ``amount of sand'' at vertex $i$.

\subsection{Data for the experiments} 
The original codes for the experiment are made available on SK's website https://sites.google.com/view/seungki/home. For the data, please consult one of the authors --- the raw data is of several gigabytes in size.

\subsection{Acknowledgments}
JD and SK are partially supported by NSF CNS-2034176. BY is supported by Sinica Investigator Award AS-IA-109-M01, and Executive Yuan Project AS-KPQ-109-DSTCP. TT and YW are supported by JSPS KAKENHI Grant Number JP20K23322.

We are hugely indebted to Deepak Dhar, who patiently explained much of the underlying physics over a long period of time, and directed us to the relevant works in physics. We also thank Deepak Dhar (again), Nick Genise, Steve D. Miller, and Phong Nguyen for their careful reading and comments, and Shi Bai for his extensive help with parts of the experiments in Section 5.

\section{Modeling LLL by a sandpile}

\subsection{The LLL algorithm}
We briefly review the LLL algorithm; for details, we recommend \cite{LLL82}, in which it is first introduced, and also \cite{JS98} and \cite{NV10}. A pseudocode for the LLL algorithm is provided in Algorithm \ref{alg:lll}. In Line 3, we deliberately left the choice algorithm, that is, the method for choosing $k$, unprescribed. The standard choice is to choose the lowest $k$ satisfying the inequality.

\begin{algorithm}
\caption{The LLL algorithm (Siegel variant)}\label{alg:lll}
\begin{enumerate}[1.]
\item[0.] Input: a basis $\mathcal{B} = \{\vb_1, \ldots, \vb_n\}$ of $\mathbb{R}^n$, a parameter $\delta < 0.75$
\item while true, do:
\item \hspace{4mm} Size-reduce $\mathcal{B}$.
\item \hspace{4mm} (Lov\'asz test) choose a $k \in \{1, \ldots, n-1\}$ such that $\delta\|\vb_{k}^*\|^2 > \|\vb_{k+1}^*\|^2$
\item \hspace{4mm} if there is no such $k$, break
\item \hspace{4mm} swap $\vb_{k}$ and $\vb_{k+1}$ in $\mathcal{B}$
\item Output $\mathcal{B} = \{\vb_1, \ldots, \vb_n\}$, a $\delta$-reduced LLL basis.
\end{enumerate}
\end{algorithm}

\begin{proposition} \label{prop:post_swap}
After carrying out Step 5 in Algorithm \ref{alg:lll}, the following changes occur:
\begin{enumerate}[(i)]
\item $\alpha_{k-1}^{new} = Q_k\alpha_{k-1}$
\item $\alpha_k^{new} =  Q_k^{-2}\alpha_k$
\item $\alpha_{k+1}^{new} =  Q_k\alpha_{k+1}$
\item $\mu_{k, k-1}^{new} =  \mu_{k+1, k-1}$
\item $\mu_{k+1, k}^{new} =  Q_k^2\mu_{k+1,k}$
\item $\mu_{k+2, k+1}^{new} =  \mu_{k+2,k} - \mu_{k+2,k+1}\mu_{k+1,k}$
\item $\mu_{k,l}^{new} = \mu_{k+1,l}, \mu_{k+1,l}^{new} = \mu_{k,l}$ for $1 \leq l \leq k-1$
\item $\mu_{l,k}^{new} = \mu_{l,k+1} - \mu_{l,k+1}\mu_{k+1,k}\mu_{k+1,k}^{new} + \mu_{l,k}\mu_{k+1,k}^{new}$ for $l \geq k+2$
\item $\mu_{l,k+1}^{new} = \mu_{l,k} - \mu_{l,k+1}\mu_{k+1,k}$ for $l \geq k+2$
\end{enumerate}
and there are no other changes. The superscript ``new'' refers to the corresponding variable after the swap.
\end{proposition}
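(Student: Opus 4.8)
The plan is to exploit the fact that a swap of $\vb_k$ and $\vb_{k+1}$ alters only the two Gram--Schmidt vectors $\vb_k^*$ and $\vb_{k+1}^*$, and to track everything through explicit expressions for these two new vectors in terms of the old ones. Since the ordered spans $\langle \vb_1, \ldots, \vb_m\rangle$ are unchanged for every $m \neq k$ (for $m \leq k-1$ the relevant vectors are untouched, and for $m \geq k+1$ we have merely reordered $\vb_k, \vb_{k+1}$), the vectors $\vb_i^*$ with $i \notin \{k, k+1\}$ and the basis vectors $\vb_i$ with $i \notin \{k, k+1\}$ are all fixed by the swap. This observation already disposes of the ``no other changes'' clause: a coefficient $\mu_{i,j}$ can move only if $i \in \{k, k+1\}$ (its numerator vector is relabelled) or $j \in \{k, k+1\}$ (its defining Gram--Schmidt vector moves), and $\alpha_i$ can move only if $i \in \{k-1, k, k+1\}$. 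Enumerating the surviving cases $i > j$ shows they are precisely the ones recorded in (iv)--(ix), with nothing omitted.

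First I would write down the two new Gram--Schmidt vectors. Projecting the new $k$-th basis vector $\vb_{k+1}$ orthogonally to $\vb_1^*, \ldots, \vb_{k-1}^*$ and using \eqref{eq:gso_decomp} gives
\begin{equation*}
(\vb_k^*)^{new} = \vb_{k+1}^* + \mu_{k+1,k}\,\vb_k^*,
\end{equation*}
whence $\|(\vb_k^*)^{new}\|^2 = \|\vb_{k+1}^*\|^2 + \mu_{k+1,k}^2\|\vb_k^*\|^2 = Q_k^{-2}\|\vb_k^*\|^2$, directly from the definition $Q_k = (\alpha_k^{-2} + \mu_{k+1,k}^2)^{-1/2}$. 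The companion vector $(\vb_{k+1}^*)^{new}$ is then obtained either by one further Gram--Schmidt step against $(\vb_k^*)^{new}$, or more cheaply by noting that the product $\|\vb_k^*\|\,\|\vb_{k+1}^*\|$ equals the two-dimensional volume in the plane orthogonal to $\vb_1^*, \ldots, \vb_{k-1}^*$ and is therefore invariant under the reordering; this yields $\|(\vb_{k+1}^*)^{new}\| = Q_k\|\vb_{k+1}^*\|$. Substituting these two norm relations into $\alpha_i = \|\vb_i^*\|/\|\vb_{i+1}^*\|$ produces (i), (ii), and (iii) at once.

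For the $\mu$-coefficients I would compute each relevant inner product against the new Gram--Schmidt vectors. The cases (iv) and (vii) are immediate, since $\vb_l^*$ is unchanged for $l \leq k-1$ and only the numerator vector is relabelled; (v) follows from the single evaluation $\langle \vb_k, (\vb_k^*)^{new}\rangle = \mu_{k+1,k}\|\vb_k^*\|^2$ together with the norm of $(\vb_k^*)^{new}$. The heavier cases are (vi), (viii), and (ix), which require the explicit form
\begin{equation*}
(\vb_{k+1}^*)^{new} = \vb_k^* - \mu_{k+1,k}Q_k^2\bigl(\vb_{k+1}^* + \mu_{k+1,k}\vb_k^*\bigr);
\end{equation*}
I would pair this (and $(\vb_k^*)^{new}$) with $\vb_l$ for $l \geq k+2$, using $\langle \vb_l, \vb_j^*\rangle = \mu_{l,j}\|\vb_j^*\|^2$. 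Throughout, the algebra collapses to the stated forms via the single identity
\begin{equation*}
1 - \mu_{k+1,k}^2 Q_k^2 = \alpha_k^{-2}Q_k^2 = \frac{\|\vb_{k+1}^*\|^2}{\|\vb_k^*\|^2}\,Q_k^2,
\end{equation*}
which trades factors of $\|\vb_k^*\|^2$ for $\|\vb_{k+1}^*\|^2$ exactly as needed; note that (vi) is just the $l = k+2$ instance of (ix).

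The conceptual content is entirely in the first paragraph; the main obstacle is purely organizational, namely keeping the numerous $\mu$-indices straight and tracking which term carries a factor of $Q_k^2$ in (viii) and (ix). I would manage this by computing $\langle \vb_l, (\vb_k^*)^{new}\rangle$ and $\langle \vb_l, (\vb_{k+1}^*)^{new}\rangle$ once symbolically and reading off all of (vi), (viii), and (ix) from those two expressions, rather than treating each line separately.
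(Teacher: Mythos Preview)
Your proposal is correct and is precisely the ``straightforward calculations'' the paper defers to \cite{LLL82}; the paper gives no argument of its own beyond that one-line reference, so your explicit write-up of the two new Gram--Schmidt vectors and the volume-invariance shortcut is, if anything, more detailed than what the paper offers.
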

\begin{proof}
Straightforward calculations (see e,g, \cite{LLL82}).
\end{proof}

\subsection{Sandpile basics}

We also briefly review the basics of the sandpile models. For references, see Dhar \cite{D99}, \cite{D06} or Perkinson \cite{P14}.

A sandpile model is defined on a finite graph $\mathcal{G}$, with one distinguished vertex called the \emph{sink}. In the present paper, we only concern ourselves with the cycle graph, say $A_n$, consisting of vertices $\{v_1, \ldots, v_n\}$ and one unoriented edge for each adjacent pair $v_i$ and $v_{i+1}$. We also consider $v_1$ and $v_n$ as adjacent. We designate $v_n$ as the sink.

A \emph{configuration} is a function $r : \{v_1, \ldots, v_n\} \rightarrow \mathbb{R}$. Just as reduction algorithms work with bases, sandpile models work with configurations. We write for short $r_i = r(v_i)$. One may think of $r_i$ as the amount or \emph{height} of the pile of sand placed on $v_i$.

\begin{figure}
\includegraphics[scale=1.2]{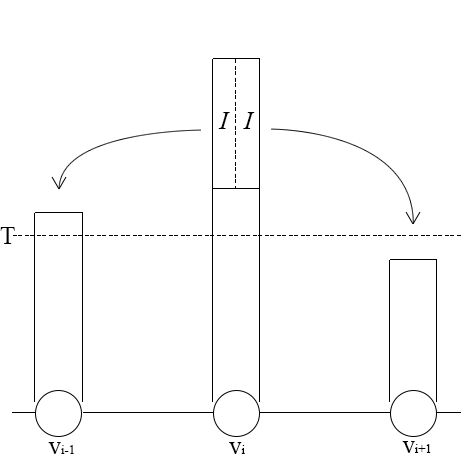}
\caption{An illustration of a (legal) toppling $T_i$.} \label{fig:toppling}
\end{figure}

Just as LLL computes a reduced basis by repeatedly swapping neighboring basis vectors, sandpiles compute a \emph{stable configuration} by repeated \emph{toppling.} Let $T, I \in \mathbb{R}_{>0}$. A configuration is \emph{stable} if $r_i \leq T$ for all $i \neq n$. A \emph{toppling operator} $T_i$ ($i \neq n$) replaces $r_i$ by $r_i - 2I$, and $r_{i-1}$ by $r_{i-1} + I$ and $r_{i+1}$ by $r_{i+1} + I$. An illustration is provided in Figure \ref{fig:toppling}. Applying $T_i$ when $r_i > T$ is called a \emph{legal toppling}. By repeatedly applying legal topplings, all excess ``sand'' will eventually be thrown away to the sink, and the process will terminate.

In our paper, $T$ --- \emph{threshold} --- will always be a fixed constant, but $I$ --- \emph{increment} --- could be a function of the current configuration, or a random variable, or both. If $I$ is independent of the configuration, we say the model is \emph{Abelian}, otherwise \emph{non-Abelian}. In Abelian models, the stable configuration reached is independent of the order of the legal topplings taken. This is not necessarily the case for non-Abelian models, as is demonstrated in Section 2.4 below.

If the increment $I$ is a random variable, we say the model is \emph{stochastic}. The (non-stochastic) Abelian sandpile theory is quite well-developed, with rich connections to other fields of mathematics --- see e.g. \cite{L10}. Other sandpile models are far less understood, especially the non-Abelian ones.

\subsection{The LLL sandpile model}

Motivated by Proposition \ref{prop:post_swap}, especially the formulas (i) -- (iii), we propose the following Algorithm \ref{alg:lllsp}, which we call the \emph{LLL sandpile model}, or LLL-SP for short. 

\begin{algorithm}
\caption{The LLL sandpile model (LLL-SP)}\label{alg:lllsp}
\begin{enumerate}[1.]
\item[0.] Input: $\alpha_1, \ldots, \alpha_{n} \in \mathbb{R}$, $\mu_{2,1}, \ldots, \mu_{n,n-1} \in [-0.5,0.5]$, a parameter $\delta < 0.75$
\item Let $r_i : = \log \alpha_i$, $\mu_i := \mu_{i+1,i}$ $T := -0.5\log \delta, Q_i := (\alpha_i^{-2} + \mu_{i+1,i}^2)^{-1/2}$.
\item while true, do:
\item \hspace{4mm} choose a $k \in \{1, \ldots, n-1\}$ such that $r_k > T$
\item \hspace{4mm} if there is no such k, break
\item \hspace{4mm} subtract $2\log Q_k$ from $r_k$
\item \hspace{4mm} add $\log Q_k$ to $r_{k-1}$ (if $k-1 \geq 1$) and $r_{k+1}$ (if $k+1 \leq n-1$)
\item \hspace{4mm} (re-)sample $\mu_{k-1}, \mu_k, \mu_{k+1}$ uniformly from $[-0.5,0.5]$
\item Output: real numbers $r_1, \ldots, r_{n-1} \leq T$
\end{enumerate}
\end{algorithm}

The only difference between LLL (Algorithm \ref{alg:lll}) and LLL-SP (Algorithm \ref{alg:lllsp}) lies in the way in which the $\mu$'s are replaced after each swap or topple. Our experimental results below demonstrate that this change hardly causes any difference in their behavior. A theoretical perspective is discussed at the end of this section.

The increment $I = \log Q_i = -\frac{1}{2}\log (e^{-2r_i} + \mu^2_i)$ is not as unnatural as it might seem --- see Figure \ref{fig:incr}. The dashed lines there represent the graph of
\begin{equation*}
I_\mu(r) = \begin{cases} r &\mbox{if $r > -\log \mu$} \\ -\log \mu &\mbox{otherwise.} \end{cases}
\end{equation*}
for comparison.
The decision to sample $\mu_i$'s uniformly is largely provisional, though some post hoc justification is provided in Figure \ref{fig:museq}. If desired, one could refine the model by adopting part of Proposition \ref{prop:post_swap} for updating $\mu_i$.

\begin{figure} 
\centering
\includegraphics[scale=0.2]{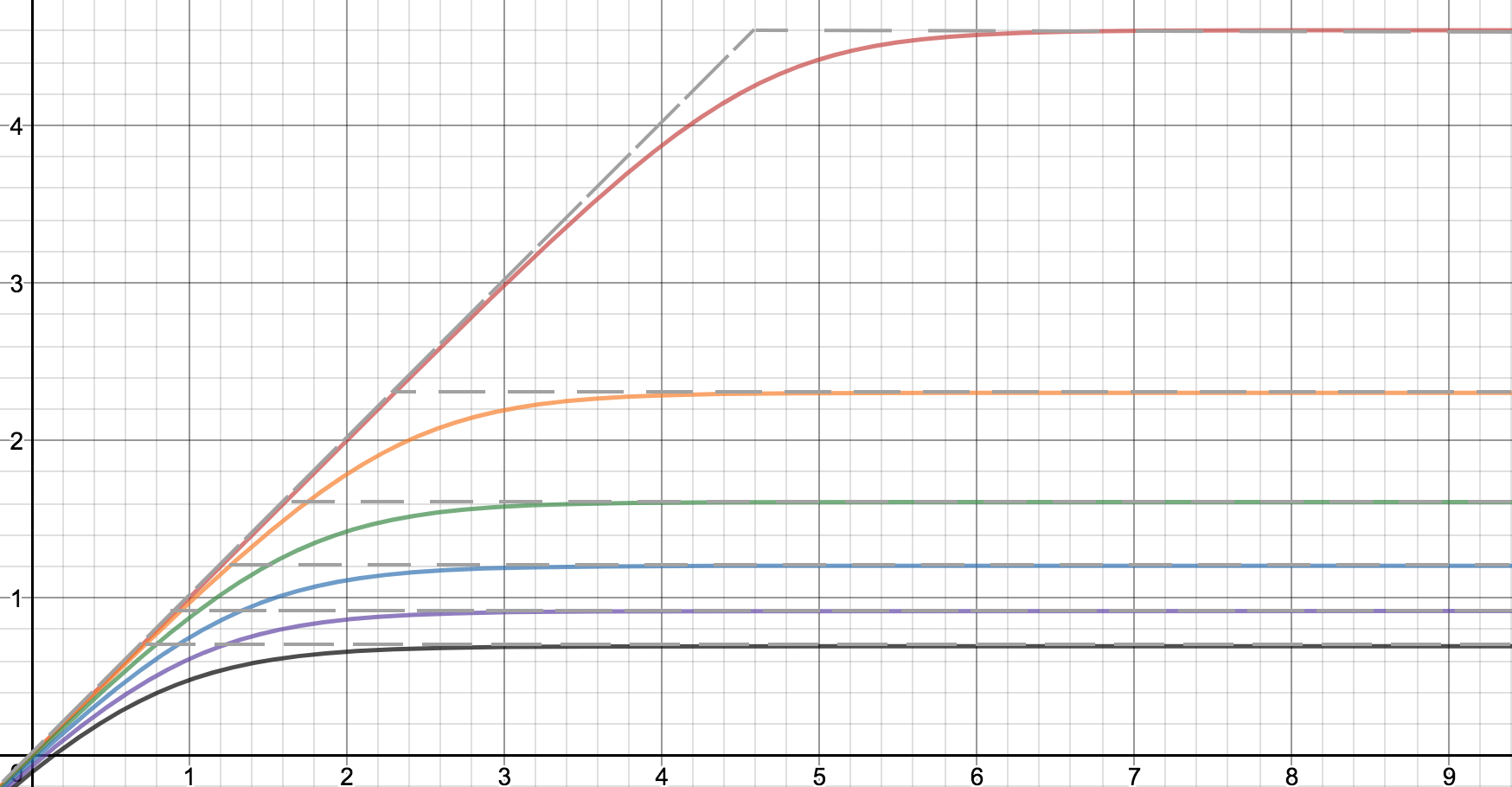}
\caption{Graphs of $\log Q_i$ as a function of $r_i$, for $\mu = 0.01, 0.1, 0.2, 0.3, 0.4, 0.5$, from top to bottom. The graph corresponding to $\mu = 0.5$ crosses the $x$-axis at $x = T \approx 0.1438$. } \label{fig:incr}
\end{figure}

\subsection{Numerical comparisons}


For each dimension $n = 80, 100, 120$, we ran LLL and LLL-SP 5,000 times with the same set of input bases of determinant $\approx 2^{10n}$, generated using the standard method suggested in Section 3 of \cite{NS06}. We used fpLLL \cite{FPLLL} for the LLL algorithm. We remind the reader that we have used the Siegel variant here.

In addition, we also ran the same experiment with the following two other choice algorithms, to see how they affect the outcome:
\begin{itemize}
\item \emph{random}: randomly and uniformly choose an index from those on which swapping/toppling is available, and swap/topple on that index.
\item \emph{greedy}: swap/topple on the index with the greatest increment $\log Q_k$.
\end{itemize}

Figure \ref{fig:output} shows the average shape of the output bases and configurations by LLL and LLL-SP.  One easily observes that the algorithms yield nearly indistinguishable outputs (except possibly for the greedy; see Remark below). In particular, since RHF can be computed directly from the $r_i$'s by the formula

\begin{equation} \label{eq:rhf}
\mbox{RHF} = \exp\left(\frac{1}{n^2}\sum_{i=1}^{n-1}(n-i)r_i\right),
\end{equation}
we expect both to yield about the same RHF. Indeed, Table \ref{table:rhf} and Figure \ref{fig:RHFdist} show that the RHF distribution of LLL and LLL-SP are in excellent agreement (again except for greedy, for which the average differs by $\approx 0.0011$).

\begin{figure} 
\centering
\includegraphics[scale=0.30]{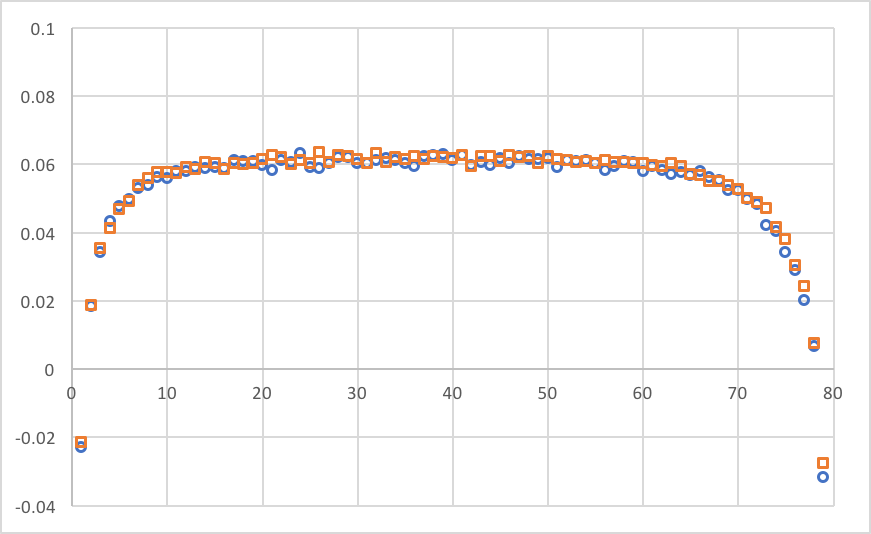} \includegraphics[scale=0.30]{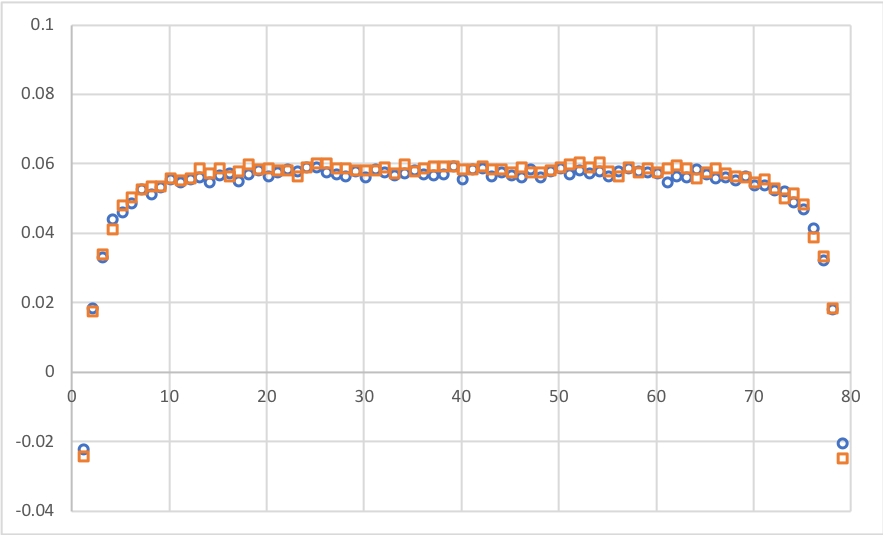} \includegraphics[scale=0.30]{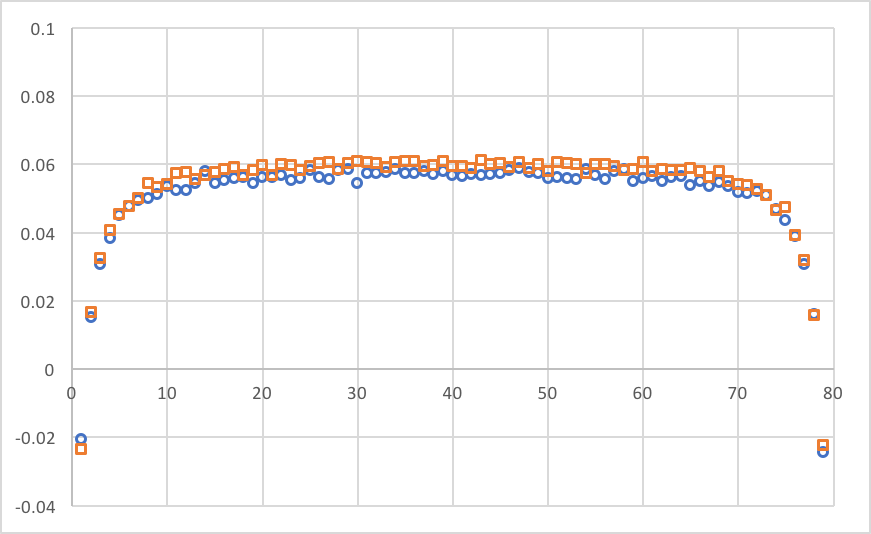}
\includegraphics[scale=0.30]{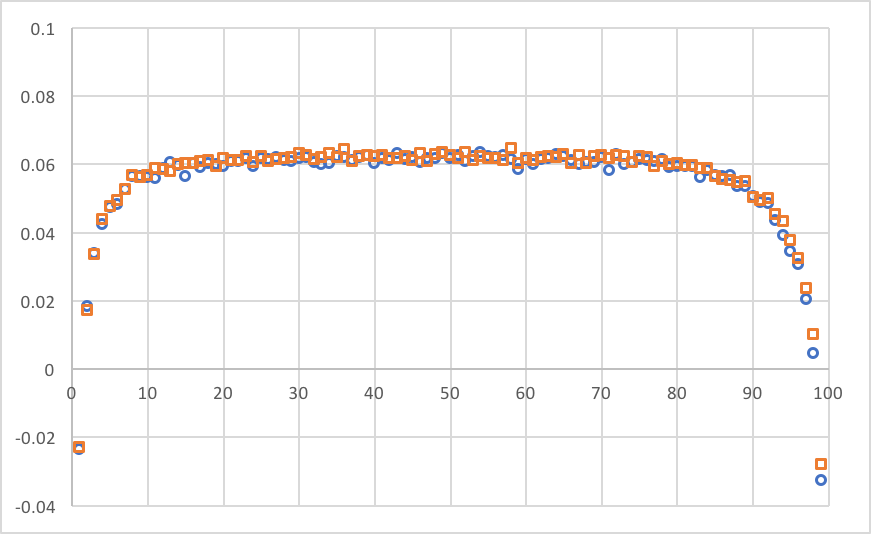} \includegraphics[scale=0.30]{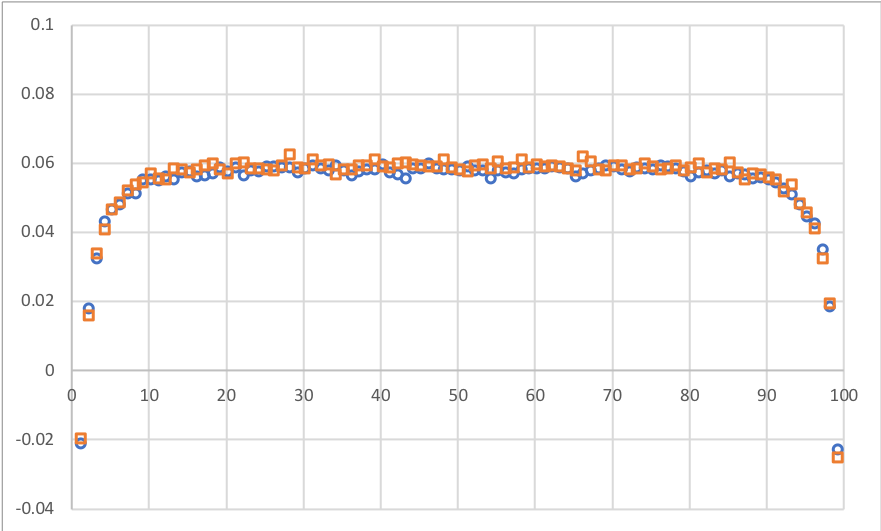} \includegraphics[scale=0.30]{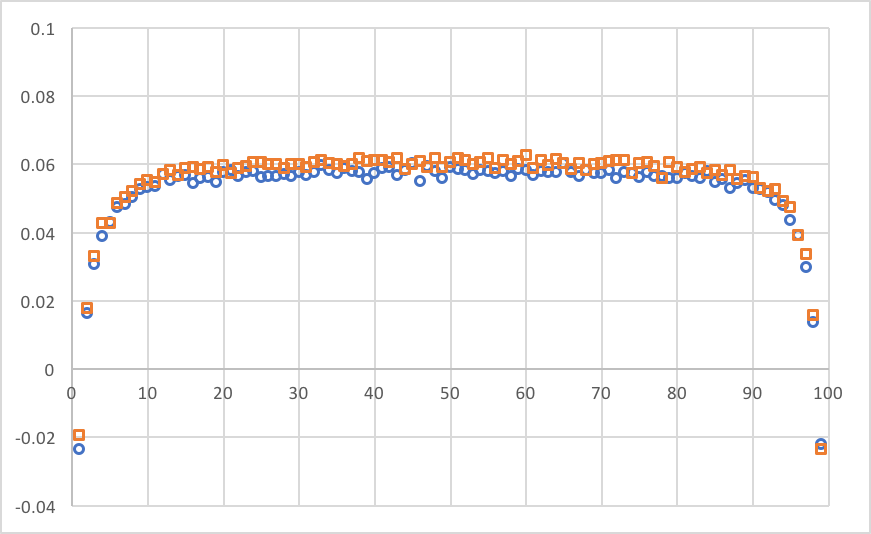}
\includegraphics[scale=0.30]{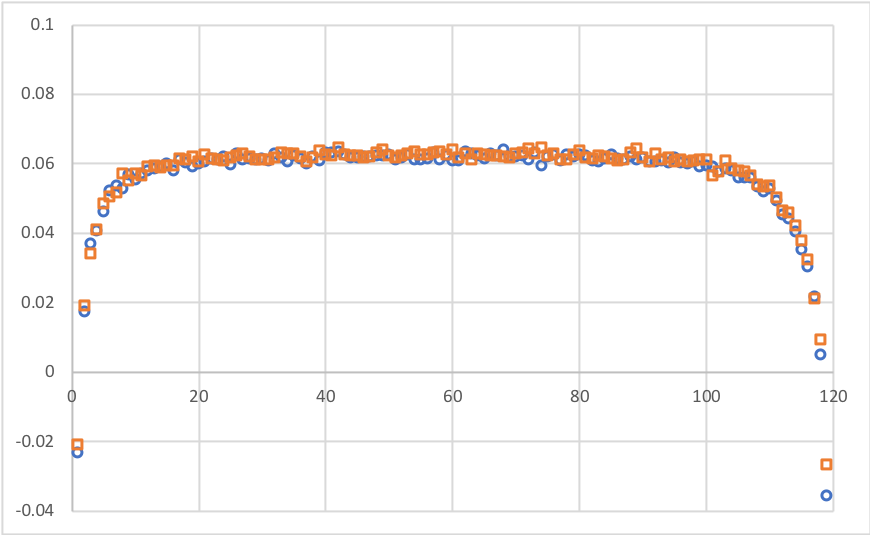} \includegraphics[scale=0.30]{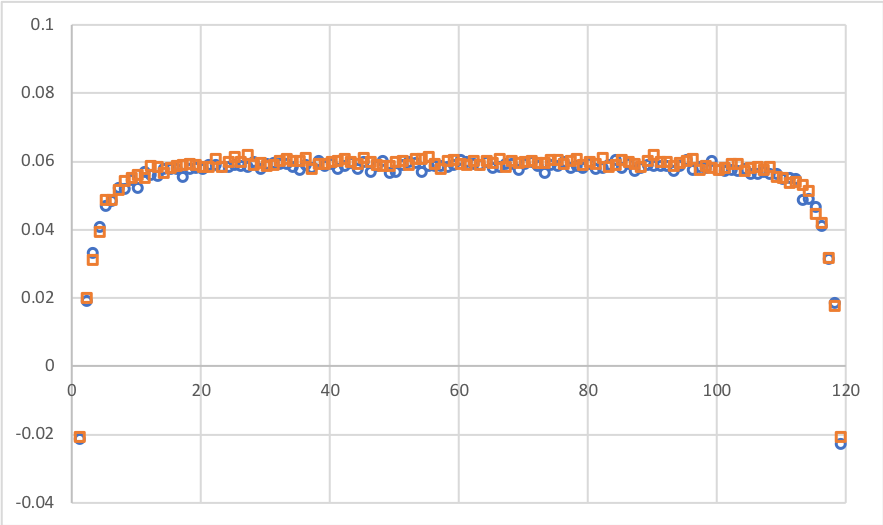} \includegraphics[scale=0.30]{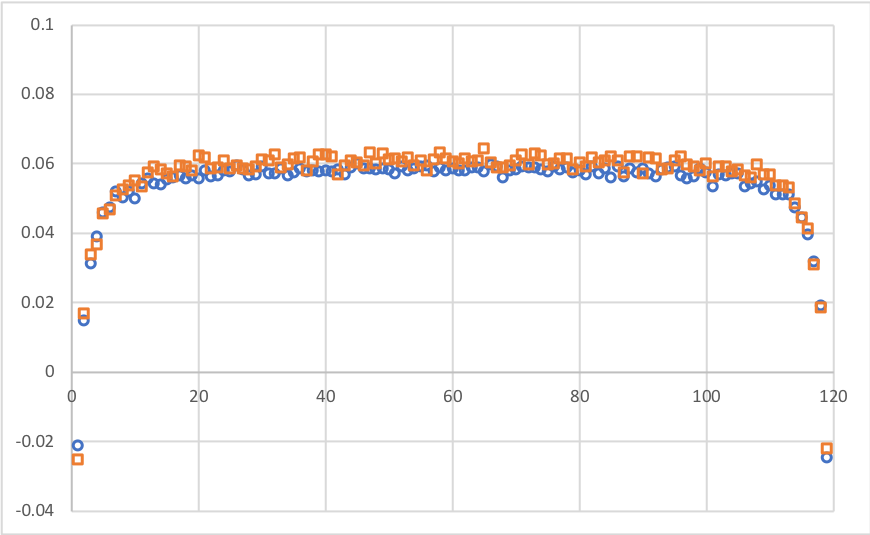}
\caption{Average output of LLL (orange square) and LLL-SP (blue circle). Graphs on each column, from left to right, correspond to the original, random, and greedy choice algorithms, respectively. Graphs on each row represent the results in dimensions 80, 100, and 120, respectively. Within each graph, the horizontal and vertical axes represent the index $k$ on vertices and the average height of the piles $r_k$, respectively.
} \label{fig:output}
\end{figure} 

\begin{table}[]
\begin{tabular}{r|l|l|l|l|l|l|}
\cline{2-7}
\multicolumn{1}{c|}{}     & \multicolumn{2}{c|}{original}                                                                                     & \multicolumn{2}{c|}{random}                                                                                         & \multicolumn{2}{c|}{greedy}                                                                                         \\ \hline
\multicolumn{1}{|c|}{dim} & \multicolumn{1}{c|}{LLL}                                 & \multicolumn{1}{c|}{LLL-SP}                              & \multicolumn{1}{c|}{LLL}                                 & \multicolumn{1}{c|}{LLL-SP}                              & \multicolumn{1}{c|}{LLL}                                 & \multicolumn{1}{c|}{LLL-SP}                              \\ \hline
\multicolumn{1}{|r|}{80}  & \begin{tabular}[c]{@{}l@{}}1.0276\\ 0.00218\end{tabular} & \begin{tabular}[c]{@{}l@{}}1.0273\\ 0.00223\end{tabular} & \begin{tabular}[c]{@{}l@{}}1.0268\\ 0.00206\end{tabular} & \begin{tabular}[c]{@{}l@{}}1.0264\\ 0.00209\end{tabular} & \begin{tabular}[c]{@{}l@{}}1.0267\\ 0.00197\end{tabular} & \begin{tabular}[c]{@{}l@{}}1.0256\\ 0.00197\end{tabular} \\ \hline
\multicolumn{1}{|r|}{100} & \begin{tabular}[c]{@{}l@{}}1.0285\\ 0.00182\end{tabular} & \begin{tabular}[c]{@{}l@{}}1.0282\\ 0.00183\end{tabular} & \begin{tabular}[c]{@{}l@{}}1.0277\\ 0.00172\end{tabular} & \begin{tabular}[c]{@{}l@{}}1.0272\\ 0.00177\end{tabular} & \begin{tabular}[c]{@{}l@{}}1.0276\\ 0.00161\end{tabular} & \begin{tabular}[c]{@{}l@{}}1.0265\\ 0.00167\end{tabular} \\ \hline
\multicolumn{1}{|r|}{120} & \begin{tabular}[c]{@{}l@{}}1.0291\\ 0.00157\end{tabular} & \begin{tabular}[c]{@{}l@{}}1.0288\\ 0.00160\end{tabular} & \begin{tabular}[c]{@{}l@{}}1.0283\\ 0.00151\end{tabular} & \begin{tabular}[c]{@{}l@{}}1.0279\\ 0.00153\end{tabular} & \begin{tabular}[c]{@{}l@{}}1.0282\\ 0.00142\end{tabular} & \begin{tabular}[c]{@{}l@{}}1.0271\\ 0.00142\end{tabular} \\ \hline
\end{tabular}
\caption{Averages and standard deviations of RHF, rounded up to appropriate digits.}
\label{table:rhf}
\end{table}

\begin{figure} 
\centering
\includegraphics[scale=0.30]{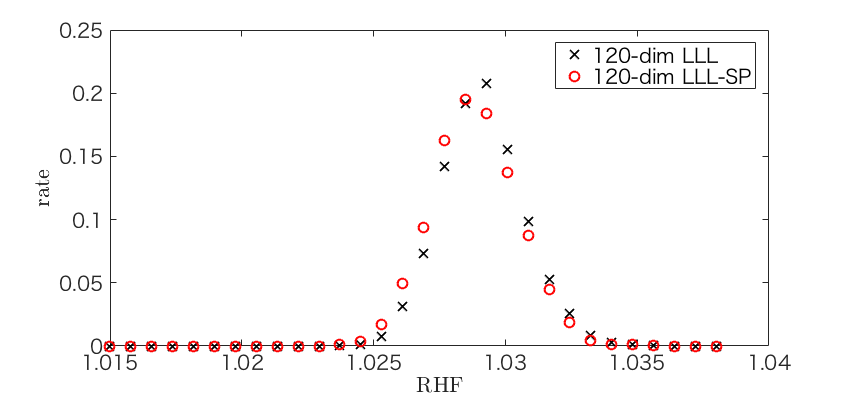}
\includegraphics[scale=0.30]{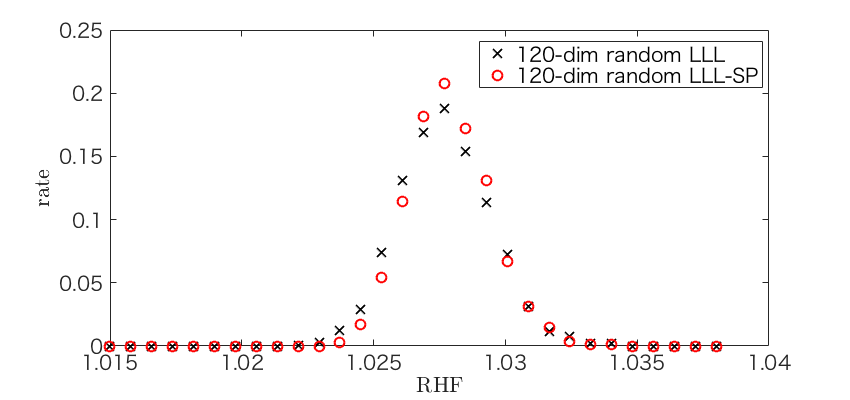} \includegraphics[scale=0.30]{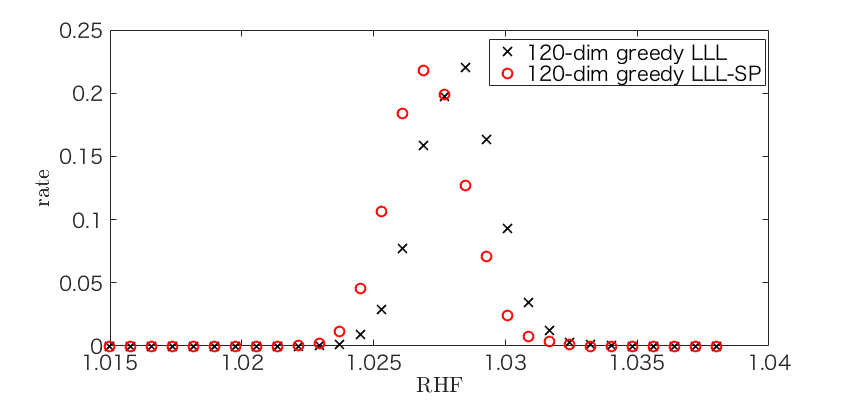}
\caption{Probability distributions of RHFs of LLL and LLL-SP in dimension 120.} \label{fig:RHFdist}
\end{figure} 

The reason that we find LLL and LLL-SP slightly differ with respect to the greedy choice algorithm has to do with the fact that, unlike the original and the random, it ``probes'' one step ahead before making its toppling choice, which has an effect on the $\mu_i$-distribution --- indeed, see Figure \ref{fig:museq} below. We expect this difference to disappear, if LLL-SP is modified to simulate the $\mu_i$-distribution more carefully, using parts of Proposition \ref{prop:post_swap}. Still, it is remarkable that the difference in the average RHF $\approx 0.0011$ is independent of dimension, and the standard deviations remain nearly identical.

The resemblance of the two algorithms runs deeper than on the level of output statistics. See Figures \ref{fig:seq} and \ref{fig:museq}, which depict the plot of points $(i, Q_{k(i)}^{-2})$ and $\mu_{k(i)+1, k(i)} = \mu_{k(i)}$ as we ran LLL and LLL-SP on dimension 80, where $k(i)$ is $k$ chosen at $i$-th iteration. The two plots are again indistinguishable, yet another piece of evidence that LLL and LLL-SP possess nearly identical dynamics. Although too cumbersome to present here, we have the same results on higher dimensions as well.


\begin{figure} 
\centering
\includegraphics[scale=0.47]{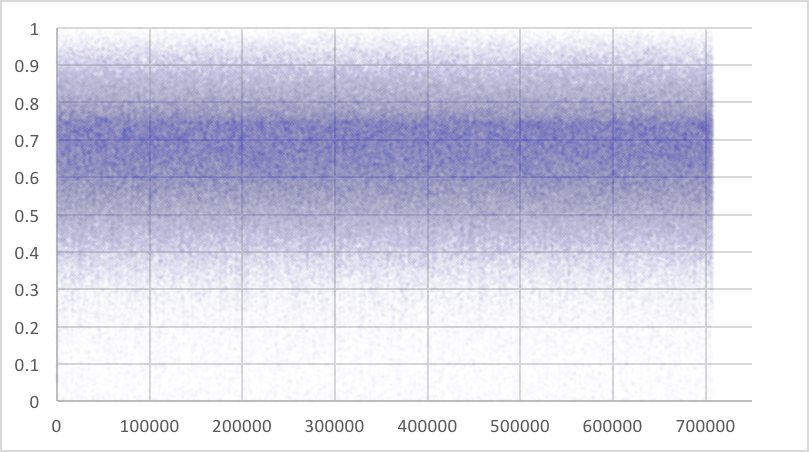} \includegraphics[scale=0.47]{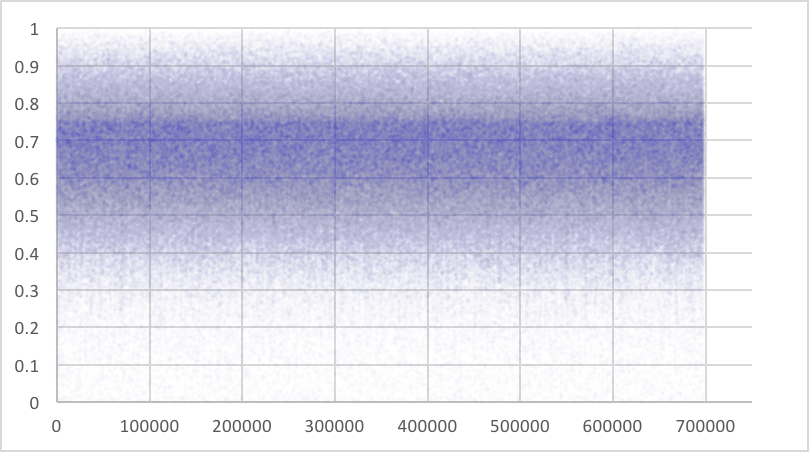}

\includegraphics[scale=0.47]{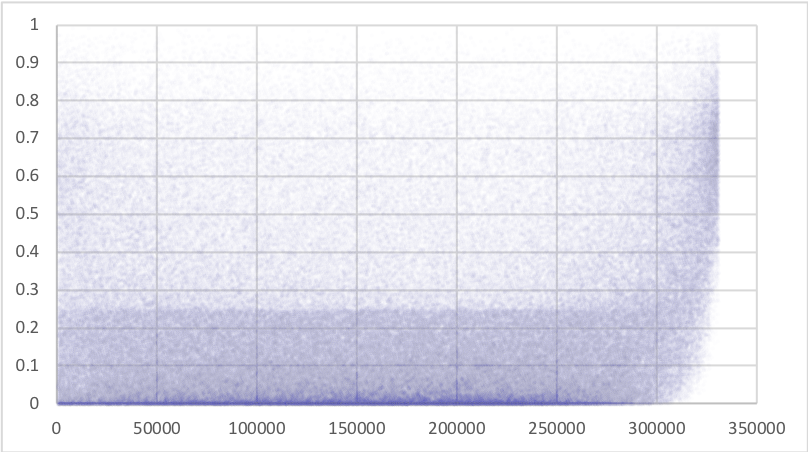} \includegraphics[scale=0.47]{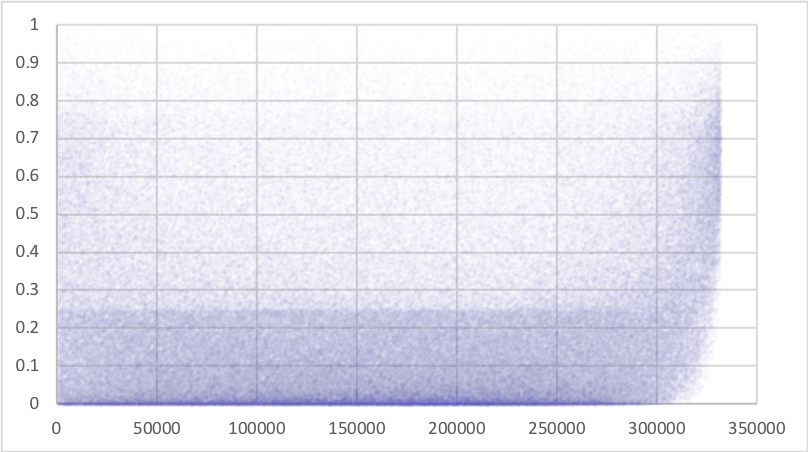}

\includegraphics[scale=0.47]{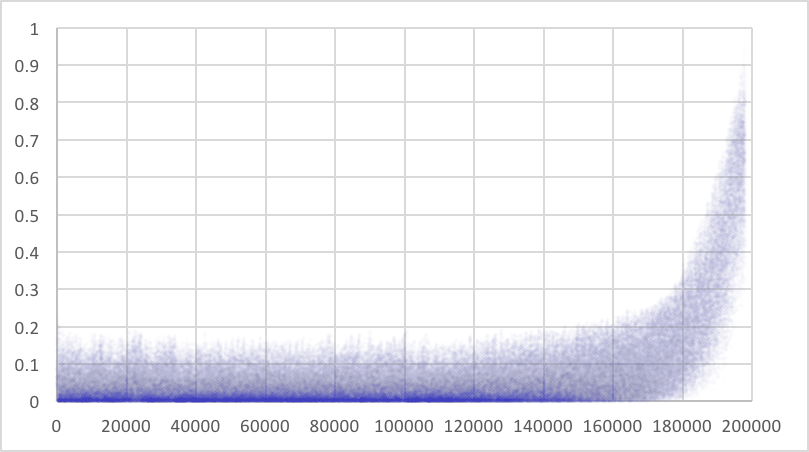} \includegraphics[scale=0.47]{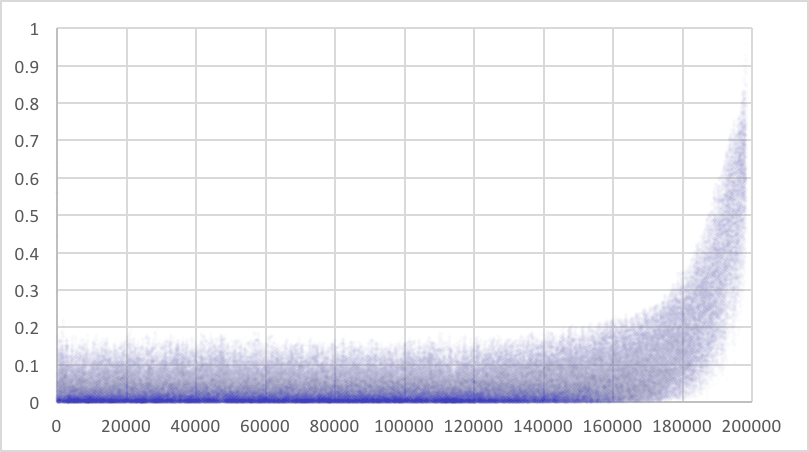}

\caption{Plots of $i$ versus $Q_{k(i)}^{-2}$ during a typical run of LLL(left) and LLL-SP(right), with respect to the sequential, random, and greedy choice algorithms, respectively from top to bottom.} \label{fig:seq}
\end{figure} 

\begin{figure}
\centering
\includegraphics[scale=0.47]{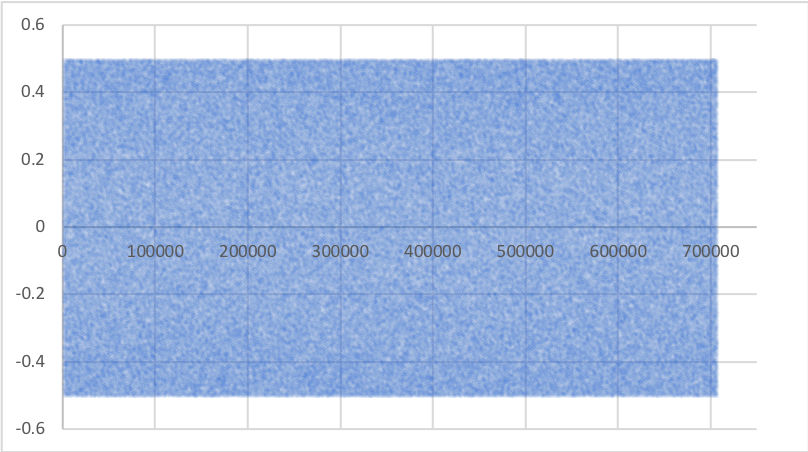} \includegraphics[scale=0.47]{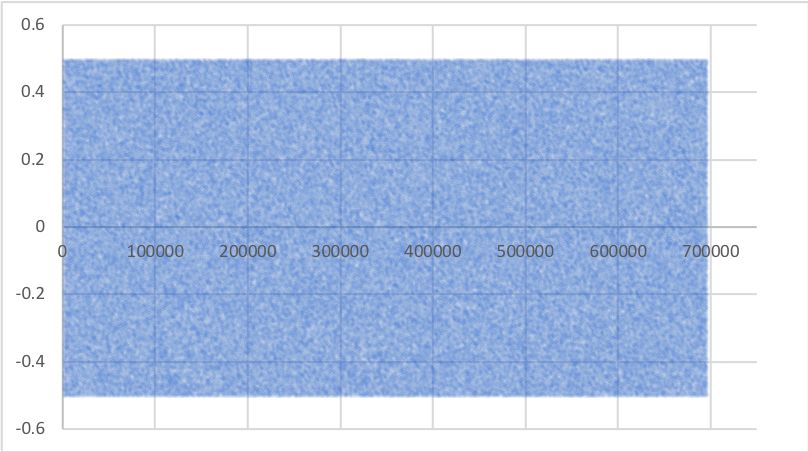}

\includegraphics[scale=0.47]{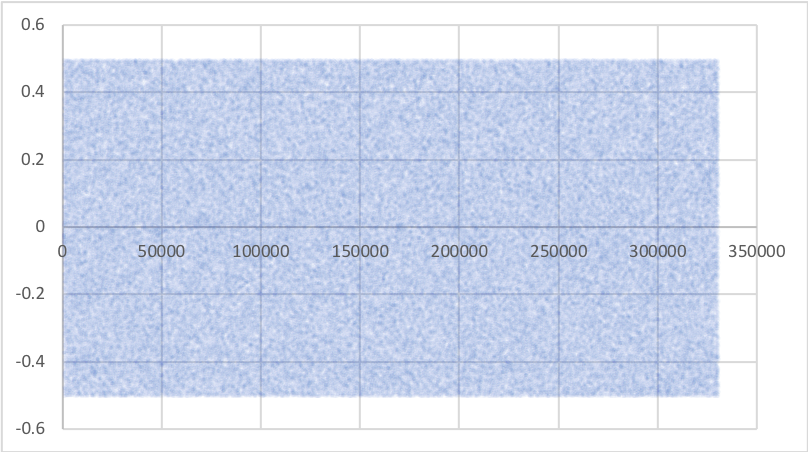} \includegraphics[scale=0.47]{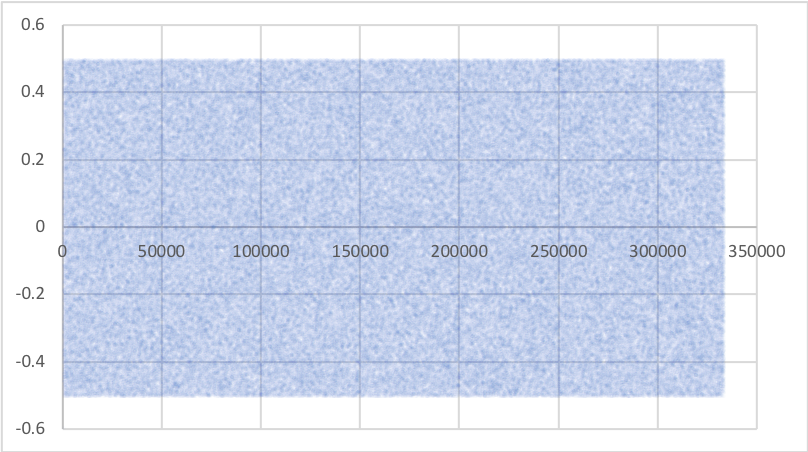}

\includegraphics[scale=0.47]{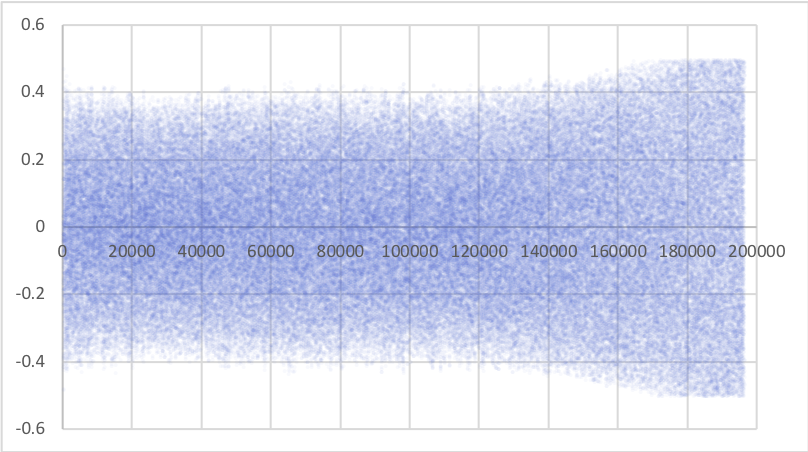} \includegraphics[scale=0.47]{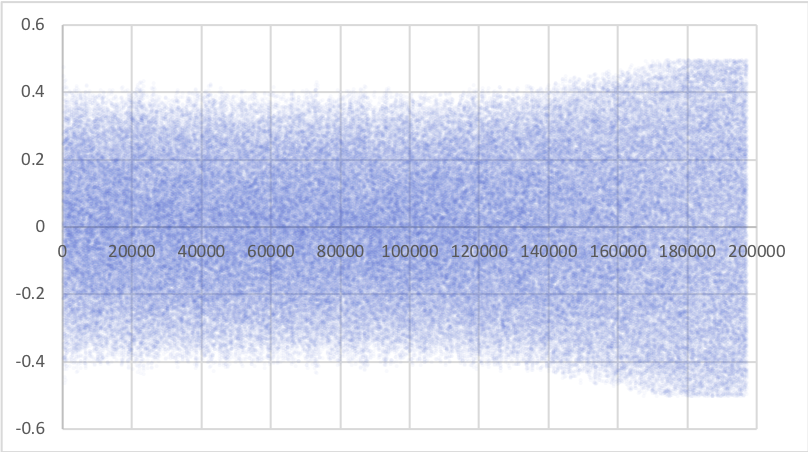}
\caption{Plots of $i$ versus $\mu_{k(i)}$ for LLL(left) and LLL-SP(right), with respect to the sequential, random, and greedy choice algorithms, respectively from top to botton.} \label{fig:museq}
\end{figure}

\subsection{Discussion}

The only difference between LLL and LLL-SP has to do with the way they update the $\mu_k (= \mu_{k+1,k})$'s. For LLL-SP, the $\mu_k$-variables are i.i.d. and independent of the $r_k$-variables. For LLL, $\mu_k$ is determined by a formula involving its previous value and $r_k$. However, it seems plausible that the $\mu_k$'s in LLL, as a stochastic process, is \emph{mixing}, which roughly means that they are close to being i.i.d, in the sense that a small perturbation in $\mu_k$ causes the next value $\mu_k^{new}$ to become near unpredictable. Numerically, this is robustly supported by the graphs at the bottom of Figure \ref{fig:museq}. Theoretically, our intuition comes from the fact that the formula $\mu_{k}^{new} = \mu_{k}/(\mu_k^2 + \alpha_k^{-2})$ (mod 1) is an approximation of the Gauss map $x \mapsto \{1/x\}$, which is well-known to have excellent mixing properties (see e.g. Rokhlin \cite{R61} and the references in Bradley \cite{B05} for more recent works).

The above discussion can be summarized and formulated in the form of a mathematical conjecture, which can then be considered a rigorous version of the statement ``LLL is essentially a sandpile model.'' Below is our provisional formulation of such a conjecture.

\begin{conjecture} \label{conj:mu}
Let $\mathcal{D}$ be a ``generic'' distribution on the set of bases in $\mathbb{R}^n$, to be used to sample inputs for LLL. Define $k(i)$, as earlier, to be the index of the pile toppled at $i$-th iteration, so that $k(i)$ is a random variable depending on the input distribution, and so is $\mu_{k(i)}$. Then

\begin{enumerate}[(i)]
\item The sequence $(|\mu_{k(i)}|)_{i = 1, 2, \ldots}$ is strongly mixing as a stochastic process. (Roughly speaking, this means $|\mu_k(N)|$ is nearly independent of $|\mu_k(M)|$ when $N-M$ is large; see the text \cite{B95} for a precise definition.)
\item Each $|\mu_{k(i)}|$ is contained in a compact subset $S$ of the set of all probability density functions on $[0, 0.5]$ with respect to the $L^\infty$-norm. $S$ is independent of the dimension, the input distribution, or any other variable.
\end{enumerate}
\end{conjecture}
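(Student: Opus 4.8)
The plan is to make rigorous the heuristic of the Discussion: that the map governing $\mu_k$ is a perturbed Gauss map and therefore inherits its statistical properties through the transfer-operator (Perron--Frobenius) method. By Proposition \ref{prop:post_swap}(v), followed by the size-reduction that returns the new coefficient to $[-0.5,0.5]$, a swap at index $k$ acts on $\mu_k$ by
\begin{equation*}
\mu_k \;\longmapsto\; \Big\{\frac{\mu_k}{\mu_k^2 + \alpha_k^{-2}}\Big\},
\end{equation*}
where $\{\cdot\}$ denotes reduction into $[-0.5,0.5]$; as $\alpha_k^{-2}\to 0$ this degenerates to the folded Gauss map $x\mapsto\{1/x\}$ on $[0,0.5]$. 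At the moment of toppling the Lov\'asz test forces $r_k>T$, hence $\alpha_k^{-2}=e^{-2r_k}$ lies in $(0,\delta)$, so I would first treat $\alpha_k^{-2}$ as a parameter ranging over this fixed interval and prove a \emph{uniform} Lasota--Yorke inequality for the resulting family of maps acting on functions of bounded variation. The delicate regimes are the near-threshold one, where the map is only mildly expanding, and the large-height one, where a turning point at $\mu=\alpha_k^{-1}$ enters the domain as for the Gauss map; both should be absorbable into a single BV-distortion estimate yielding a spectral gap uniform over the parameter.

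Granting a uniform gap, the two assertions follow from one mechanism. For (ii), the gap contracts any admissible initial density into a fixed ball in the BV-norm after boundedly many topplings at a site, and such a ball embeds compactly into $C([0,0.5])\subseteq L^\infty$ by Arzel\`a--Ascoli; together with a genericity hypothesis on $\mathcal{D}$ strong enough to keep the very first densities regular, this would place every law of $|\mu_{k(i)}|$ in a single compact set $S$, manifestly independent of $i$, $n$, and $\mathcal{D}$. For (i), a transfer-operator spectral gap is equivalent to exponential decay of correlations, which is the standard route to summable strong-mixing coefficients for the observable $|\mu_{k(i)}|$.

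The genuine obstacle---and the reason this remains a conjecture---is that $|\mu_{k(i)}|$ is not the orbit of a single fixed map: the toppling index $k(i)$ roams across sites, the parameter $\alpha_k^{-2}$ is coupled to the globally evolving configuration, and $k(i)$ is itself a deterministic functional of the entire state, so the process is neither stationary nor Markov in $\mu$ alone, while the only randomness is that of the (fixed) input. My plan is to recast a run as a \emph{sequential}, non-autonomous random dynamical system---recording, for each site, the subsequence of topplings it receives together with their parameters---and to invoke the decay-of-correlations theory for such systems, which applies once the maps are drawn from a fixed uniformly expanding family. This reduces everything to two quantitative dynamical facts about LLL, precisely the ones our experiments in Figure \ref{fig:museq} display but which I expect to form the hard core of any proof: (a) that $e^{-2r_k}$ at toppling stays within a fixed compact range with overwhelming probability, so that the maps never degenerate, and (b) that consecutive topplings of a given site are separated by enough intervening activity for its parameter to decorrelate. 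Establishing (a) and (b) rigorously would, I believe, require the sandpile-theoretic control of the toppling dynamics developed elsewhere in the paper, and is where I would expect most of the effort to go.
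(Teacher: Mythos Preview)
This statement is a \emph{conjecture} in the paper, not a theorem: the paper offers no proof, only the heuristic in the Discussion subsection (the Gauss-map analogy for $\mu_k^{\mathrm{new}} = \mu_k/(\mu_k^2+\alpha_k^{-2}) \pmod 1$ together with the numerical evidence of Figure~\ref{fig:museq}). There is therefore no ``paper's own proof'' to compare against.

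Your proposal is not a proof either, and to your credit you say so explicitly. What you have written is a research \emph{program} that fleshes out the paper's one-line heuristic into a concrete strategy: uniform Lasota--Yorke inequalities over the parameter family $\alpha_k^{-2}\in(0,\delta)$, a resulting spectral gap for the transfer operator, compactness of the invariant densities via BV~$\hookrightarrow C\hookrightarrow L^\infty$ for part~(ii), and exponential decay of correlations for part~(i). This is exactly the kind of elaboration the authors gesture at when they cite Rokhlin on the Gauss map, and the technical ingredients you name are the standard ones for this circle of ideas.

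You also correctly isolate the genuine obstruction, which the paper does not attempt to address: the process $(|\mu_{k(i)}|)_i$ is not the orbit of a single map or even an i.i.d.\ random composition, because the parameter $\alpha_k^{-2}$ is coupled to the global configuration, the site index $k(i)$ is a deterministic functional of the whole state, and the only randomness present is in the input. Your reformulation as a sequential/non-autonomous dynamical system is reasonable, but the two ``quantitative dynamical facts'' you reduce to---confinement of $e^{-2r_k}$ at toppling times and sufficient decorrelation of parameters between successive visits to a site---are themselves open and likely as hard as the conjecture. In short: your plan is a faithful and more detailed version of the paper's own intuition, it identifies the right difficulties, and it remains, as in the paper, a conjecture.
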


The design intent of Conjecture \ref{conj:mu} is so that what is provable for LLL-SP would also be provable for LLL by an analogous argument (e.g. the theorems in Section 4), while retaining the flexibility as to what the correct distribution of $\mu_k$ might be. It is to be updated accordingly as our understanding of LLL and LLL-SP progresses, in the hope that Conjecture \ref{conj:mu} may come within reach at some point.



\section{Abelian sandpile analogue of LLL, and its RHF gap}

The drawback of LLL-SP as a model of LLL is that, being non-Abelian, it is difficult to study theoretically; indeed, there are few proven results on non-Abelian sandpile models. In this section, we introduce a certain Abelian stochastic sandpile model that we named SSP, which is in a sense an abelianized version of LLL-SP. At a first glance, SSP seems rather removed from LLL, but the shapes of their average output are surprisingly similar. Moreover, SSP admits a mathematical theory that is analogous to that of ASM due to Dhar \cite{D90}, \cite{D06}. This allows us to prove statements such as the average-worst case gap in RHF (Theorem \ref{prop:ssprhf}), suggesting that SSP may be a good starting point for investigating the RHF distributions of reduction algorithms.

We again mention that this section is in fact an exposition of a concurrently written work \cite{Kprep} by SK and YW, slightly rearranged to emphasize the connection to LLL. Although we transferred much of our work on SSP to a separate paper in order to properly treat it from the physical perspective, we offer its detailed summary for the completeness of our narrative here.

\subsection{Background on ASM}

To facilitate the reader's understanding, we briefly describe the Abelian sandpile model (ASM), the most basic of sandpile models, and parts of its theory that is relevant to us. Its pseudocode is provided in Algorithm \ref{alg:asm}. See Dhar \cite{D90}, where the theory is originally developed, or the presentation slides by Perkinson \cite{P14}.

\begin{algorithm}
\caption{Abelian sandpile model (ASM)}\label{alg:asm}
\begin{enumerate}[1.]
\item[0.] Input: $r_1, \ldots, r_{n-1} \in \mathbb{Z}$, parameters $T, I \in \mathbb{Z}$, $0 < I \leq T/2$
\item while true, do:
\item \hspace{4mm} choose a $k \in \{1, \ldots, n-1\}$ such that $r_k > T$
\item \hspace{4mm} if there is no such k, break
\item \hspace{4mm} subtract $2I$ from $r_k$
\item \hspace{4mm} if $k > 1$, add $I$ to $r_{k-1}$; if $k < n$, add $I$ to $r_{k+1}$
\item Output: integers $r_1, \ldots, r_{n-1} \leq T$
\end{enumerate}
\end{algorithm}

The important ASM concepts for us are that of the \emph{recurrent configurations} and the \emph{steady state}. Let $M$ be the set of all stable (non-negative) configurations of ASM. Given two configurations $r, s \in M$, we have the operation
\begin{equation*}
r \oplus s = \mbox{(stabilization of $r+s$)},
\end{equation*}
which is the outcome of ASM with input being the configuration $r+s$ defined by $(r+s)_i = r_i + s_i$ for each $i$. Unlike LLL, the output of ASM is independent of the choice of toppling order --- hence the term ``Abelian'' --- and thus $\oplus$ is well-defined. This operation makes $M$ into a commutative monoid.

Define $g \in M$ to be the configuration with $g_1 = 1$ and $g_2 = \ldots = g_{n-1} = 0$. We call $r \in M$ \emph{recurrent} if
\begin{equation*}
\underbrace{g \oplus \ldots \oplus g}_{m\, \mathrm{times}} = \mbox{$r$ for infinitely many $m$}.
\end{equation*}
One can actually take any $g$ for which at least one $g_i$ is coprime to the g.c.d. of $T$ and $I$ (this condition is only to avoid concentration on a select few congruence classes). Equivalently, with LLL in mind, we can also define that $r$ is recurrent if there exist infinitely many non-negative input configurations such that their stabilization results in $r$. It is a theorem that the set $R$ of the recurrent configurations of ASM forms a group under $\oplus$.

(Note for the experts: these definitions of recurrent configurations may be rather unconventional, but are equivalent to the standard formulations, e.g. the one in \cite{D90}. It is a simple exercise to prove the equivalence, under the following setting: interpret the space of all configurations as $\Z^{n-1}$ and consider the orbits of the toppling operators, which are cosets of a certain sublattice of $\Z^{n-1}$. In each coset, there exists exactly one configuration to which infinitely many non-negative configurations on the same coset stabilizes.)

One may ask, given an $r \in R$, what is the proportion of $m \in \Z_{>0}$ that satisfies $g \oplus \ldots \oplus g\, (m\, \mathrm{times}) = r$? It turns out that the answer is $1/|R|$ for any $r \in R$, that is, each element of $R$ has the same chance of appearing. This distribution, say $\rho$, on $R$ is called the \emph{steady state} of the system. And the phrase \emph{average output shape} that we have been using in the empirical sense obtains a formal definition as $\sum_{r \in R} \rho(r)r$. The steady state is unique in the following sense: choose an $r \in R$ according to $\rho$, and take any configuration $s$; then $r \oplus s$ is again distributed as $\rho$.


\subsection{Introduction to SSP}

A pseudocode for SSP is provided in Algorithm \ref{alg:ssp}. This is exactly the same as ASM, except for Step 4, which determines the amount of sand to be toppled at random. The decision to sample from the uniform distribution is an arbitrary one; we could have chosen any compactly supported distribution, and much of the discussion below still applies.

\begin{algorithm}
\caption{Stochastic sandpile (SSP)}\label{alg:ssp}
\begin{enumerate}[1.]
\item[0.] Input: $r_1, \ldots, r_{n-1} \in \mathbb{Z}$, parameters $T, I \in \mathbb{Z}$, $0 < I \leq T/2$
\item while true, do:
\item \hspace{4mm} choose a $k \in \{1, \ldots, n-1\}$ such that $r_k > T$
\item \hspace{4mm} if there is no such k, break
\item \hspace{4mm} sample $\gamma$ uniformly from $\{1, \ldots, I\}$
\item \hspace{4mm} subtract $2\gamma$ from $r_k$
\item \hspace{4mm} augment $\gamma$ to $r_{k-1}$ and $r_{k+1}$
\item Output: integers $r_1, \ldots, r_{n-1} \leq T$
\end{enumerate}
\end{algorithm}

The average output shape of this stochastic sandpile model (SSP) is shown in Figure \ref{fig:sspoutput}. Figure \ref{fig:sspoutput} shares all the major characteristics of Figure \ref{fig:output}: flat in the middle, and diminishing at both ends. In cryptographic literature these features have been respectively referred to as the geometric series assumption(GSA) and its failure at the boundaries. In Section 5, we will see that finite-size scaling theory provides a far more quantitatively robust description of the output shape.

\begin{figure}
\centering
\includegraphics[scale=0.6]{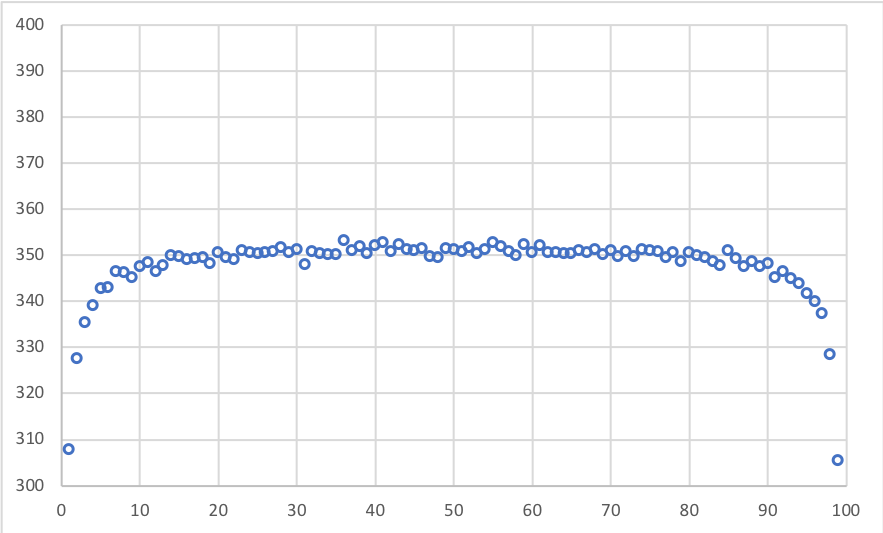}
\caption{Average output of SSP, $n = 100$, $I = 200$ and $T = 400$.} \label{fig:sspoutput}
\end{figure}

\subsection{Mathematical properties of SSP}

A mathematical theory of SSP closely analogous to that of ASM has been recently developed in \cite{Kprep}, largely motivated by the experimental result of the previous section. Every aspect of the above-mentioned ASM theory carries over to the SSP theory, except that instead of configurations one works with a distribution on the set of configurations, due to its stochastic nature. For configurations $r^{(1)}, \ldots, r^{(k)}$ and $p_i \in (0,1]$ such that $p_1 + \ldots + p_k = 1$, we write
\begin{equation} \label{eq:sspelt}
\sum_{i=1}^k p_i[r^{(i)}]
\end{equation}
to represent a distribution that assigns probability $p_i$ to the configuration $r^{(1)}$. For instance, if $r$ is a configuration unstable at vertex $i$, and if $v_i = (0, \ldots, -1, 2, -1, \ldots, 0)$ with $2$ in $i$-th entry, then for the toppling operator $T_i$ we have
\begin{equation} \label{eq:ssptopple}
T_i[r] = \sum_{\gamma=1}^I \frac{1}{I}[r - \gamma v_i].
\end{equation}


We say a configuration of form \eqref{eq:sspelt} is \emph{mixed} if $k \geq 2$ and \emph{pure} otherwise, \emph{stable} if all $r^{(i)}$'s are stable, and \emph{nonnegative} if all $r^{(i)}$'s are nonnegative.

The most important property of SSP is that, like ASM, it possesses a unique steady state, that is, a mixed configuration $g$ such that
\begin{equation*}
g \oplus f = g
\end{equation*}
for any nonnegative $f$. It is clear that if we understand the steady state, then we understand the RHF distribution. The following is easy to prove:

\begin{theorem}\label{prop:ssprhf}
SSP possesses a unique steady state. The worst-case $\log\mathrm{(RHF)}$ of SSP is $T/2 + o_n(1)$. The average $\log\mathrm{(RHF)}$ of SSP is bounded from above by $T/2 - I/2e^2 + o_n(1)$.
\end{theorem}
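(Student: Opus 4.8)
The plan is to reduce all three assertions to an understanding of the steady-state marginal distribution of a single pile $r_i$, since by \eqref{eq:rhf} we have $\log(\mathrm{RHF}) = \frac{1}{n^2}\sum_{i=1}^{n-1}(n-i) r_i$, so that both the worst case and the average are linear functionals of the $r_i$. I would first dispatch the existence and uniqueness of the steady state by transporting the ASM argument: the add-a-grain-and-stabilize dynamics define a Markov chain on the finite set of recurrent configurations of SSP, and I would check that this chain is irreducible and aperiodic, whence a unique stationary distribution $\rho$ exists. This is precisely the part of the SSP theory that parallels Dhar's theory of ASM and is developed in \cite{Kprep}, so here I would only indicate the dictionary (configurations $\rightsquigarrow$ distributions on configurations, $\oplus$ acting as in \eqref{eq:ssptopple}) and invoke it.

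For the worst case, the upper bound is immediate from stability: every configuration in the support of $\rho$ has $r_i \leq T$, so $\log(\mathrm{RHF}) \leq \frac{T}{n^2}\sum_{i=1}^{n-1}(n-i) = \frac{(n-1)T}{2n} = \frac{T}{2} + o_n(1)$. For the matching lower bound I would exhibit a recurrent configuration attaining this, the natural candidate being the maximal stable configuration $r_i \equiv T$; I would verify it is recurrent by the SSP analogue of the recurrence criterion (it is reachable from a sufficiently loaded configuration and returns under the dynamics). Since altering $O(1)$ boundary piles changes the weighted sum by only $O(I/n) = o_n(1)$, any recurrent configuration equal to $T$ on the bulk already forces the worst case to equal $T/2 + o_n(1)$.

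The substance is the average bound, which reduces to showing $\mathbb{E}_\rho[r_i] \leq T - I/e^2$ for the bulk piles (the weights $(n-i)/n^2$ again render the boundary piles negligible). My approach is to analyze a single pile as a Markov process: it is pushed up toward $T$ by increments $+\gamma$, $\gamma \sim \mathrm{Unif}\{1,\dots,I\}$, arriving from neighboring topplings, and knocked down by $-2\gamma$ whenever it crosses $T$. I would focus on the last toppling before stabilization: if the height just before it is $T + d$ with $d \in \{1,\dots,I\}$, the resulting height is $X = T + d - 2\gamma$ with $\gamma$ uniform over the values that stop the cascade, and a direct computation shows the undershoot $T - X$ has conditional mean at least a fixed positive multiple of $I$. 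Accounting for the increments a pile receives after its last toppling but before output, and bounding the probability that it climbs all the way back to $T$ without re-toppling, then yields the net gap $I/e^2$.

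The main obstacle is precisely this last step: the single-pile picture is not literally a clean renewal process, because the increments a pile receives are generated by the topplings of its neighbors and are therefore correlated with its own history and with the global steady state $\rho$. The crux is thus to justify the single-site reduction rigorously—using the Abelian property of SSP and the uniqueness of $\rho$ to decouple the relevant events—and then to extract the explicit constant $1/e^2$ from the competition between the downward undershoot (of order $I$) and the subsequent upward relaxation toward the barrier $T$. I expect the steady-state machinery of \cite{Kprep} to be exactly what makes this competition computable, while the remaining pieces (the undershoot lower bound and the boundary $o_n(1)$ terms) are routine.
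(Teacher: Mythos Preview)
Your worst-case argument matches the paper's, but your route to the average bound is genuinely different from the paper's and leaves open precisely the step you flag as the crux.

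The paper does not bound $\mathbb{E}_\rho[r_i]$ site by site. Instead it runs a global ``parallelepiped'' argument: applying $T_1\cdots T_{n-1}$ to a far-away pure configuration produces a distribution supported on a parallelepiped in configuration space; subsequent topplings translate this parallelepiped while acting linearly on the distribution it carries; simultaneous diagonalization of the commuting $T_i$'s shows this distribution converges to a unique limit $\wp$. This single argument both establishes existence and uniqueness of the steady state and --- the decisive point --- yields an explicit bound on the maximum point density of the steady state, roughly $(I/2)^{-(n-1)}$, attained at the corner $(T,\ldots,T)$. The average bound then follows by pure counting: estimate the number $N(\alpha)$ of stable configurations with $\log(\mathrm{RHF}) > \alpha$, and choose $\alpha$ so that $N(\alpha)\cdot (I/2)^{-(n-1)} \to 0$ as $n\to\infty$; a lattice-point estimate gives $\alpha = T/2 - I/2e^2$. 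No marginal computation and no decoupling is ever invoked.

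Your single-pile renewal/undershoot heuristic would, if made rigorous, yield the stronger pointwise statement $\mathbb{E}_\rho[r_i] \leq T - I/e^2$. But the obstacle you correctly isolate --- that the increments arriving at site $i$ are generated by neighbor topplings and hence correlated with the history of $r_i$ --- is real, and your expectation that the steady-state machinery of \cite{Kprep} supplies the needed decoupling is misplaced: that machinery is exactly the density-plus-counting route sketched above, not a marginal analysis, so it does not furnish the single-site reduction you require. As it stands, your crux step is a gap, whereas the paper's argument closes cleanly by trading the marginal question for a uniform density bound plus a volume count.
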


We note that empirically one observes $\log\mathrm{(RHF)} \approx T/2 - I/8$ on average.

\begin{proof}[Sketch (and discussion) of proof]

This is essentially Proposition 8 of \cite{Kprep}. We present the sketch of the proof for completeness. Most of the argument is devoted to the existence of the steady state, from which the rest of the theorem follows.

\begin{figure}
\centering
\includegraphics[scale=0.3]{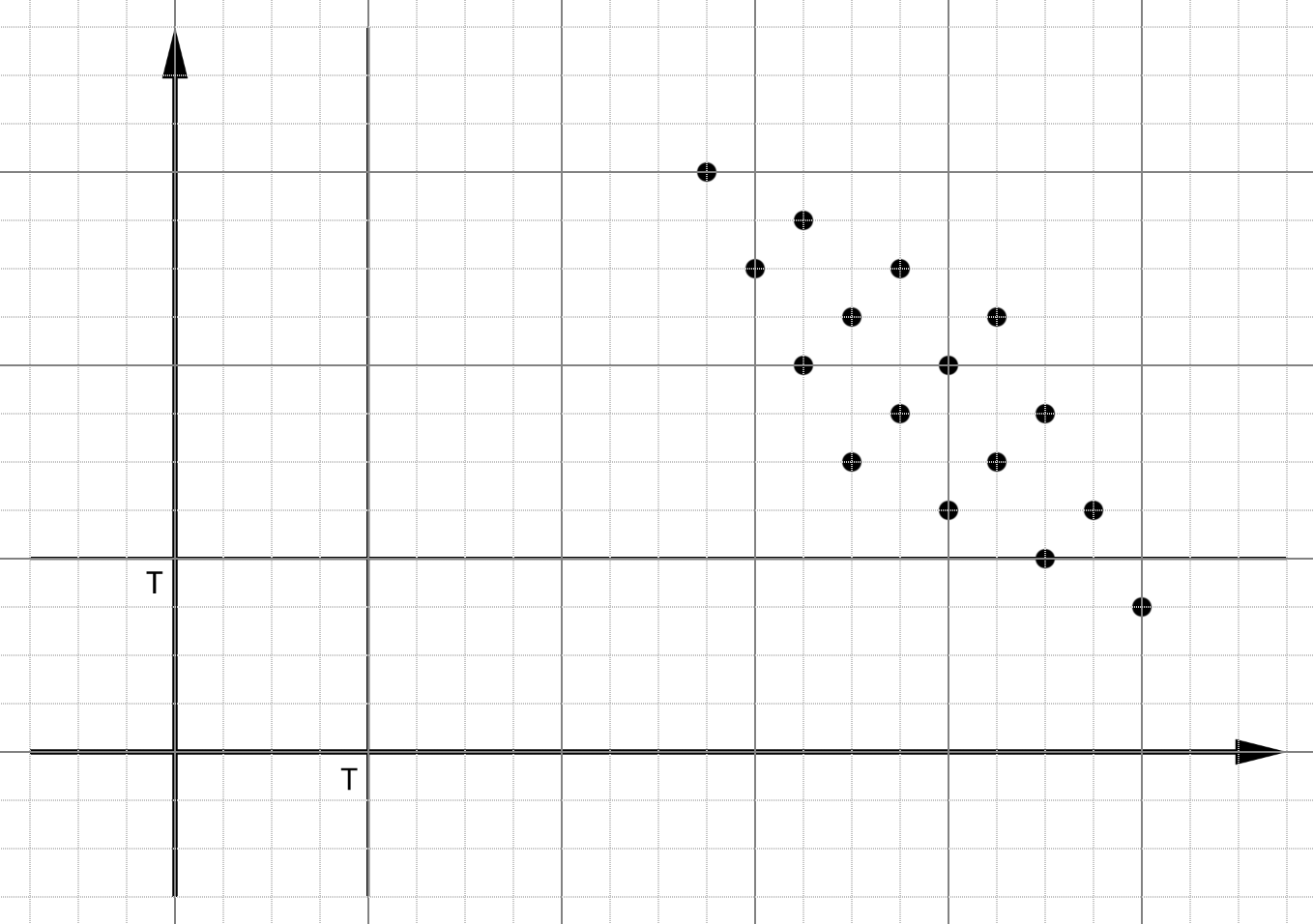} \vspace{3mm}

\includegraphics[scale=0.3]{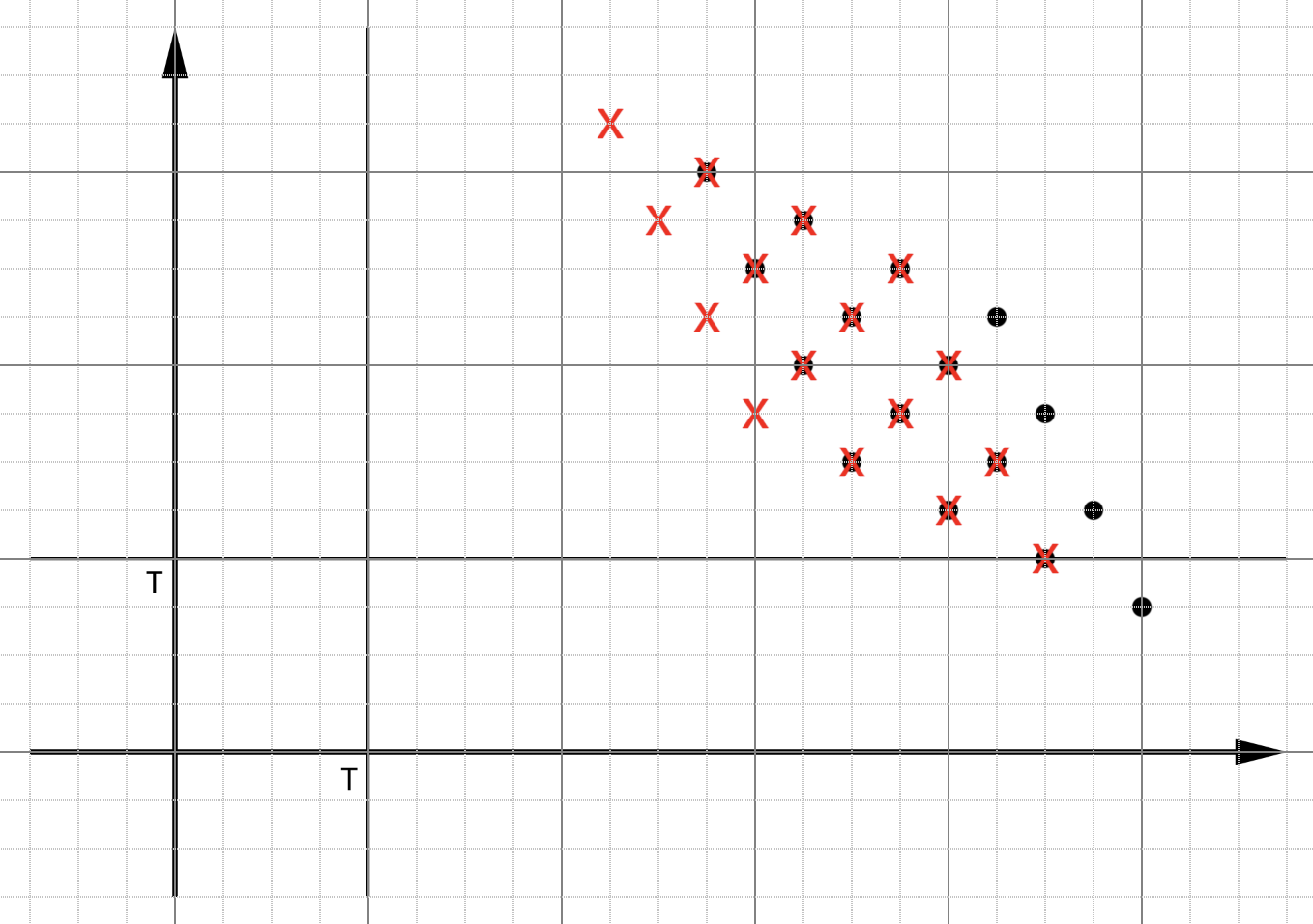} \vspace{3mm}

\includegraphics[scale=0.3]{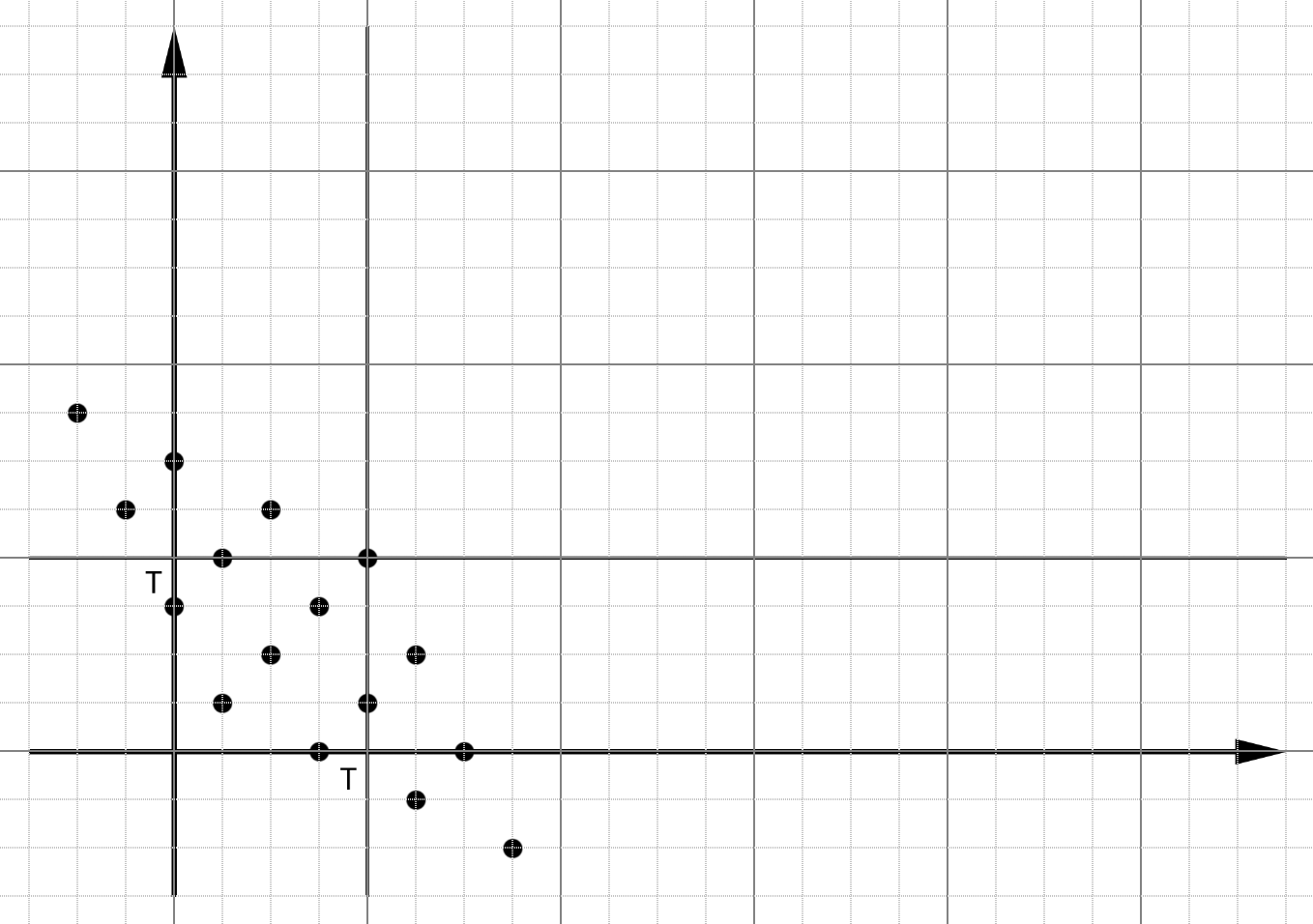} \vspace{2mm}

\caption{The parallelepiped argument.} \label{fig:par}
\end{figure} 

Take an unstable (pure) configuration $r$. If $r$ is sufficiently far away from the origin in the configuration space, we must topple on each and every vertex at least once --- in fact, arbitrarily many times --- on the way of stabilizing $r$. So consider $T_1T_2 \ldots T_{n-1}[r]$, where $T_i$ is the toppling operator on vertex $i$. By repeated applications of \eqref{eq:ssptopple}, $T_1T_2 \ldots T_{n-1}[r]$ is a distribution on the configuration space that is supported on a parallelepiped-shaped cluster, as illustrated in the top of Figure \ref{fig:par} in case $n = 3$ and $I = 4$; the upper-right vertex in the parallelogram is $r - (1, 1, \ldots, 1)$.

Applying $T_i$ to this parallelepiped-shaped distribution amounts to ``pushing'' the parallelepiped in the direction of $i$, resulting in another parallelepiped-shaped distribution. The middle graph in Figure \ref{fig:par} illustrates this process, by indicating with x marks the outcome of applying $T_1$ to the original distribution (assuming that the horizontal axis represents $r_1$). Repeating, we eventually reach the situation as in the bottom of Figure \ref{fig:par}, where none of the $T_i$ would preserve the shape of the parallelepiped, since $(T, T, \ldots, T)$ is already a stable configuration and thus $T_i$ leaves it there. From this point on, the action of $T_i$ can no longer be easily described.

However, we claim that, for any $r$ sufficiently far enough from the origin, the distribution on the parallelepiped obtained by the time the upper-right corner reaches $(T, \ldots, T)$ is arbitrarily close to a certain limiting distribution $\wp$. To see this, consider the action of $T_i$ on the distribution on the parallelepiped, while forgetting the information about where that parallelepiped is located in the configuration space. Then one notices that each $T_i$ acts as a linear operator on the space of such distributions. Simultaneously diagonalizing all $T_i$'s --- possible because they pairwise commute --- one finds that $1$ is the single largest eigenvalue of multiplicity one, whose corresponding eigenvector is $\wp$. Upon repeated applications of $T_i$'s, the components corresponding to the lesser eigenvalues converge to zero, proving the claim. This proves that SSP has a unique steady state.

In fact, $\wp$ can be easily computed, allowing us to show that the maximum point density of the steady state occurs at $(T, \ldots, T)$ with density $\approx (I/2)^{-(n-1)}$. This is enough to deduce a nontrivial upper bound on the average RHF, as follows. Estimate the number $N(\alpha)$ of stable configurations whose $\log\mbox{(RHF)}$ are greater than $\alpha$, and take $\alpha$ such that $N(\alpha) \cdot (I/2)^{-(n-1)}$ vanishes as $n \rightarrow \infty$. It turns out we can choose $\alpha = T/2 - I/2e^2$.

\end{proof}

There are a couple of difficulties in directly applying the same idea to LLL or LLL-SP. For instance, because the increment depends on the $r_i$'s for those systems, the effect of $T_i$ is not as neat as illustrated in Figure \ref{fig:par}. It would push the side of the parallelepiped with ``uneven force,'' skewing the shape of the parallelepiped and the distribution lying on it.
This makes proving the existence of the steady state for LLL or LLL-SP difficult.

However, for the purpose of bounding the average RHF away from the worst-case, all we need to show is that the maximum density of the output distribution cannot be too large. This seems feasible yet quite vexing; we state it as a conjecture below for future reference. As in the SSP case, we expect that the maximum density is attained on the upper-right corner.

\begin{conjecture} \label{conj:mass}
For a generic distribution $\mathcal{D}$ on the set of bases of $\R^n$, the probability density function of the corresponding output distribution $\mathcal{D}^\circ$ of LLL (or LLL-SP) is bounded from above by a constant $C$ that depends only on $n$.
\end{conjecture}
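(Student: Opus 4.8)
The plan is to imitate the parallelepiped argument used for Theorem~\ref{prop:ssprhf}, replacing the exact diagonalization of the toppling operators --- unavailable here because LLL-SP is non-Abelian --- by Conjecture~\ref{conj:mu} together with a volume non-collapse estimate. Throughout I would work in the configuration space $\R^{n-1}$ of heights $(r_1,\ldots,r_{n-1})$ and regard $\mathcal{D}^\circ$ as the law of the terminal configuration. Since a generic $\mathcal{D}$ forces each vertex to be toppled many times before stabilization, the density contributed by the input is smeared away long before termination; it therefore suffices to bound, by a constant depending only on $n$, the density that the dynamics itself produces near the stable region.

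First I would establish a \emph{spreading} step, the analogue of forming $T_1 T_2 \cdots T_{n-1}[r]$ in the SSP sketch. Over a window of topplings long enough to hit every vertex, Algorithm~\ref{alg:lllsp} injects one fresh uniform $\mu_k$ at each toppling, and the resulting change $-2\log Q_k = \log(e^{-2r_k} + \mu_k^2)$ applied to $r_k$, together with the increments $\log Q_k$ delivered to the neighbours, is a smooth function of these $\mu$'s. Computing the Jacobian of this composite map from Proposition~\ref{prop:post_swap}, and checking that it is nondegenerate away from the locus $\mu_k \to 0$, shows that the distribution after such a window is absolutely continuous with density bounded above by a constant $C(n)$. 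For LLL, where the $\mu_k$ are not literally resampled, this is exactly where Conjecture~\ref{conj:mu} enters: part~(ii) supplies a uniform $L^\infty$ bound on the law of each $|\mu_{k(i)}|$, and the strong mixing of part~(i) lets the increments in the window be treated as effectively independent, which is what the change of variables requires.

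The difficult part is to propagate this bound all the way to termination without the density blowing up. In SSP the increments were state-independent, the operators $T_i$ commuted, and one could diagonalize them simultaneously and read off the limiting eigenvector $\wp$, whose maximal density at the corner $(T,\ldots,T)$ was $\approx (I/2)^{-(n-1)}$. Here the increment depends on $r_k$, so the $T_i$ neither commute nor act linearly; as noted after Theorem~\ref{prop:ssprhf}, they push the sides of the parallelepiped with uneven force and skew the supporting distribution. I would therefore replace the spectral computation with a Lyapunov-type estimate showing that, over each window, the fresh randomness injected by the new $\mu$'s spreads the distribution at least as fast as the state-dependent skewing contracts it, so that its density stays below $C(n)$ rather than compounding over successive windows. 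In the bulk this should be tractable via the asymptotically Abelian approximation $\log Q_k \approx \log|\mu_k|$, valid when $r_k$ is well above $T$ (cf. Figure~\ref{fig:incr}), which makes the increment nearly state-independent and the skewing a lower-order effect.

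I expect the genuine obstacle to lie in the boundary layer where the $r_i$ are close to $T$. There the state-dependence is strongest, and, just as in SSP, the stopping rule conditions the terminal increments on the event that every height lands in the stable region, pinning mass toward the corner $(T,\ldots,T)$. For SSP this concentration was harmless only because $\wp$ could be written down explicitly; without it one must show directly that the accumulated skewing over the $\Theta(E)$ topplings of a full run cannot squeeze the distribution and drive its density past $C(n)$. Producing such a non-collapse estimate that is uniform in the run length (though it may and must depend on $n$) in this non-Abelian boundary regime is, I believe, the crux of the conjecture, and the reason the authors rate it feasible but vexing.
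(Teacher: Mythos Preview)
The statement you are attempting is Conjecture~\ref{conj:mass}, and the paper offers no proof of it: it is explicitly introduced as a conjecture because, in the authors' words, the required bound ``seems feasible yet quite vexing.'' There is therefore no paper proof against which to compare your proposal.

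Your write-up is not a proof but a research plan, and you say as much in your final paragraph: you identify the boundary-layer non-collapse estimate as ``the crux of the conjecture'' and leave it open. That is an honest assessment, and your outline closely mirrors the paper's own heuristic discussion preceding the conjecture --- the parallelepiped spreading idea from the SSP argument, the observation that the state-dependent increments skew the parallelepiped and obstruct the spectral computation, and the expectation that mass concentrates near the corner $(T,\ldots,T)$. But recognizing the obstacle is not the same as removing it. Your proposed Lyapunov-type replacement for the diagonalization is only named, not constructed; the asymptotically Abelian approximation $\log Q_k \approx \log|\mu_k|$ is valid precisely where the problem is easy, and you concede it does not cover the terminal regime where the conjecture actually bites.

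There is also a technical wrinkle in your spreading step that you pass over too quickly. You claim that the Jacobian of the composite map $(\mu_{k(1)},\ldots) \mapsto (r_1^{\mathrm{new}},\ldots)$ is ``nondegenerate away from the locus $\mu_k \to 0$,'' and conclude a uniform density bound. But $\partial(\log Q_k)/\partial\mu_k = -\mu_k/(e^{-2r_k}+\mu_k^2)$ vanishes at $\mu_k=0$, and since $\mu_k$ is uniform on $[-1/2,1/2]$ this locus has positive density in the source. A vanishing Jacobian in the change-of-variables formula produces an \emph{unbounded} image density, not a bounded one, so even the single-window smoothing you invoke requires an integrability argument you have not supplied. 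This is repairable (the singularity is mild and of codimension one), but it is another place where the sketch is incomplete rather than routine.
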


It may also be interesting to try to deduce other statements on the RHF of SSP, e.g. a lower bound on the average RHF, or why the average RHF appears to be Gaussian, as in Figure \ref{fig:RHFdist}.

\section{Regarding time complexity}

Although expanding the SSP theory, and Theorem \ref{prop:ssprhf} in particular, to LLL-SP seems challenging for the time being, we are able to prove some attractive statements for LLL-SP with respect to its complexity, which we present below. We also consider their extensions to LLL assuming the truth of Conjecture \ref{conj:mu}.

\subsection{A lower bound}

The theorem below gives a probabilistic \emph{lower} bound on the complexity of LLL-SP, which agrees up to constant factor with the well-known upper bound. There are two ingredients in the proof: (i) measuring the progress of the LLL algorithm by the quantity \emph{energy}, a well-known idea from the original LLL paper \cite{LLL82} (ii) bounding the performance of LLL-SP by a related SSP.

\begin{theorem} \label{thm:lowertime}
Consider LLL-SP, and an input configuration $r$ whose \emph{log-energy} $E = E(r)$, defined by
\begin{equation*}
E(r) = \sum_{j=1}^{n-1} \sum_{i=j}^{n-1} (n-i)r_i,
\end{equation*}
is sufficiently large --- in fact, $E > 10H$ works, with $H$ defined as in \eqref{eq:enbound}. Then the probability that LLL-SP is not terminated in $E/4$ steps is at least $1 - CE^{-1/2}$ for an absolute constant $C > 0$.
\end{theorem}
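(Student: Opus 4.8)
The plan is to use the log-energy $E$ as a monotone potential and to show that each toppling lowers it by a controlled amount whose conditional mean is bounded, so that reaching a stable configuration must consume many steps. First I would rewrite the potential by interchanging the order of summation,
\[
E(r) = \sum_{i=1}^{n-1} i(n-i)\, r_i ,
\]
so that the coefficient of $r_i$ is $c_i := i(n-i)$. A single LLL-SP toppling at index $k$ replaces $(r_{k-1}, r_k, r_{k+1})$ by $(r_{k-1}+\beta,\, r_k - 2\beta,\, r_{k+1}+\beta)$ with $\beta = \log Q_k$ (Steps 5--6, matching Proposition \ref{prop:post_swap}(i)--(iii)), so the change in $E$ equals $\beta\,(c_{k-1} - 2c_k + c_{k+1})$. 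Since the second difference of $c_i = i(n-i)$ is the constant $-2$, and the boundary cases $k=1$, $k=n-1$ (where sand falls into the sink, i.e. $c_0=c_n=0$) give the same value, every toppling lowers $E$ by exactly $2\beta = 2\log Q_k$. Hence if LLL-SP halts after $S$ topplings with increments $\beta_1, \ldots, \beta_S$, then $E - E_{\mathrm{fin}} = 2\sum_{i=1}^S \beta_i$, where $E_{\mathrm{fin}}$ is the energy of the stable output; since every output satisfies $r_i \le T$, we have $E_{\mathrm{fin}} \le H$ with $H$ as in \eqref{eq:enbound}, so $\sum_{i=1}^S \beta_i \ge (E-H)/2$.

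Next I would bound each increment probabilistically. Whenever we topple, $r_k > T$, so $\alpha_k^{-2} = e^{-2r_k} < \delta$ and
\[
\beta_i = -\tfrac12\log\!\big(\alpha_{k(i)}^{-2} + \mu_{k(i)}^2\big) \le -\log|\mu_{k(i)}| =: Y_i ,
\]
with $Y_i \ge \log 2 > 0$, the $\alpha^{-2}$ term only helping. The crucial point is that the $Y_i$ are dominated by an i.i.d. family: LLL-SP resamples $\mu$ uniformly on $[-1/2,1/2]$, under the standard (or random) choice rule the toppled index is a function of the heights alone, and a resampled value $\mu_j$ influences the dynamics only at the instant it is consumed in a toppling of $j$. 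Thus the values $\{\mu_{k(i)}\}_i$ actually used are distinct, adaptively selected but value-independent draws, which one couples to an i.i.d. Uniform$[-1/2,1/2]$ sequence by an optional-skipping argument; taking the input $\mu$'s uniform as well removes the at-most-$(n-1)$ ``first-use'' exceptions. This is precisely the step of ``bounding LLL-SP by a related SSP'': the comparison sandpile has increment $-\log|\mu|$, independent of the heights. Consequently the $Y_i$ are i.i.d. with mean $\bar Y = \mathbb{E}[-\log|U|] = 1 + \log 2 < 2$ and finite variance.

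Finally I would run the concentration step. On the bad event $\{S < E/4\}$, nonnegativity of the $Y_i$ gives $\sum_{i=1}^{\lceil E/4\rceil} Y_i \ge \sum_{i=1}^S Y_i \ge \sum_{i=1}^S \beta_i \ge (E-H)/2$. Using $E > 10H$, the right-hand side is at least $0.45\,E$, whereas the mean of the left-hand side is $\lceil E/4\rceil\,\bar Y \approx (1+\log 2)E/4 \approx 0.423\,E$, leaving a deviation of order $E$; note that this is exactly why the clean constant $E/4$ works, since $2\bar Y \approx 3.39 < 4$. As the variance of $\sum_{i=1}^{\lceil E/4\rceil} Y_i$ is of order $E$, Chebyshev's inequality bounds the probability of this deviation by $O(1/E)$, which is stronger than the claimed $CE^{-1/2}$ and finishes the proof.

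I expect the main obstacle to be the dependency structure in the second step: although the choice of $k$ ignores the $\mu$-values, the increments feed back into the heights and hence into which index is toppled later, so treating the consumed $\mu$'s as an i.i.d. sequence requires a genuine optional-skipping/martingale argument rather than naive independence. Closely tied to this is the handling of the input $\mu$'s --- for a worst-case input (say $\mu_k=0$ with a very tall pile) a single early toppling can dissipate $\Theta(E)$ of energy and defeat the bound, so one truly needs the input $\mu$'s to be generic (uniform) or needs $H$ to absorb their contribution. The energy bookkeeping and the Chebyshev estimate are routine once these points are settled.
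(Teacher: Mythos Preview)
Your proof follows the same strategy as the paper's: use the log-energy as a monotonically decreasing potential, bound each decrement from above by $-2\log|\mu_{k(i)}|$, invoke the i.i.d.\ structure of the consumed $\mu$-values, and finish with a concentration inequality. You in fact sharpen the paper's execution in two ways: you are explicit about the optional-skipping/coupling argument that justifies treating the consumed $\mu$'s as i.i.d.\ (the paper simply asserts this), and you apply Chebyshev directly to the partial sum to obtain an $O(1/E)$ failure probability, whereas the paper detours through Berry--Esseen and only recovers the weaker $O(E^{-1/2})$ stated in the theorem.
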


Observe that the familiar upper bound $O(n^2\log \max_i\|\b_i\|)$ on the number of required steps is equivalent to $O(E)$, with the implicit constant depending on $\delta$.

\begin{proof}
If the algorithm is terminated, then $E$ must have become less than
\begin{equation*}
\sum_{i=1}^{n} (n-i+1)(n-i)T/2,
\end{equation*}
(where $T := -\log \delta^{1/2} > 0$) which equals,
\begin{equation} \label{eq:enbound}
H:= \frac{T}{6}(n^3 - n).
\end{equation}
Taking the converse, we see that if $E$ is greater than \eqref{eq:enbound}, then LLL-SP has not yet terminated. At $i$-th toppling, $E$ decreases by at most $\log \mu_{k(i)}^{-2}$, where $k(i)$ is the index of the vertex in which $i$-th toppling occured. If toppled $N$ times, the decrease in $E$ is bounded by at most $F_N := \sum_{i=1}^N \log \mu_{k(i)}^{-2}$. In sum,
\begin{equation} \label{eq:lowertimegoal}
\mathrm{Prob}(E - F_N > H)
\end{equation}
gives the lower bound on the probability that LLL-SP is not terminated after $N$ swaps. Hence, it suffices to show that \eqref{eq:lowertimegoal} is bounded from below by $1-CE^{-1/2}$ when $N = E/2$.

The central limit theorem is applicable on $F_N$, since $\mu_{k(i)}$ are i.i.d.
More precisely, we apply the Berry-Esseen theorem, which asserts the following. Suppose we have i.i.d. random variables $X_1, X_2, \ldots$, so that $m = \mathbb{E}(X_1)$, $\sigma = (\mathbb{E}(X_1^2) - \mathbb{E}(X_1)^2)^{1/2}$, and $\rho = \mathbb{E}(X_1^3)$ are all finite. Furthermore, let $Y_N = \sum_{i=1}^N X_i$, and let $G_N(x)$ be the cumulative distribution function of $Y_N$, and $\Phi_N(x)$ be the cumulative distribution function of the normal distribution $N(Nm, N\sigma^2)$. Then for all $x$ and $N$,
\begin{equation*}
\left| G_N(x) - \Phi_N(x) \right| = O(N^{-1/2}),
\end{equation*}
where the implied constant depends on $m, \sigma, \rho$ only.

We let $X_i = \log \mu_{k(i)}^{-2}$ so that $F_N = G_N$, and apply the Berry-Esseen theorem. It is easy to compute and check that $m, \sigma, \rho$ are all finite e.g. $m = 2(1+\log 2) \approx 3.386$ and $\sigma = 2$. Then, for a random variable $\mathcal{N}_N \sim \mathcal N(Nm, N\sigma^2)$, \eqref{eq:lowertimegoal} is bounded by
\begin{equation*}
\mathrm{Prob}(E - \mathcal{N}_N > H)
\end{equation*}
plus an error of $O(N^{-1/2})$.

Now choose $N = E/4$, so that $\mathcal{N}_N \sim \mathcal N((1+\log 2)E/2, E)$. Using Chebyshev's inequality we can prove
\begin{equation*}
\mathrm{Prob}(\mathcal{N}_N \geq 0.9E) \leq O(E^{-1}),
\end{equation*}
where the implied constant is absolute. Thus if $E$ is large enough so that $E - H > 0.9E$, we have that \eqref{eq:lowertimegoal} is at least $1 - CE^{-1/2}$ for some $C > 0$, as desired.

\end{proof}

\begin{remark}
1. We can use the same idea to obtain a lower bound on the average RHF of LLL-SP, but it turns out to be slightly less than $1$, which happens to be useless in the context we are in.

2. There exists a central limit theorem for a strong mixing process (see \cite{B95}), and also a central limit theorem for a sequence of independent but non-identical sequence of random variables (e.g. the Lyapunov CLT). Conjecture \ref{conj:mu} states that the $|\mu_{k(i)}|$ of LLL is strong mixing (weaker than independent) and non-identical (though contained in a compact set). We do not know whether there exists a central limit theorem that applies in this context, though we suspect that there should be.

\end{remark}

\subsection{The optimal LLL problem}

The optimal LLL problem (see e.g. \cite{A00}) asks whether LLL with the optimal parameter $\delta = 3/4$ terminates in polynomial time. The following theorem, while crude, shows that this is true for LLL-SP with arbitrarily high probability.

\begin{theorem} \label{thm:optimal}
For any $\eta > 0$ small, LLL-SP with $\delta = 3/4$ terminates after $O_\eta(E)$ steps with probability $1 - \eta$.
\end{theorem}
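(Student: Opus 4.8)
The plan is to track the \emph{energy} $E$ exactly as in Theorem~\ref{thm:lowertime}, but now to bound the number of topples from \emph{above}. Recall that a single topple at $k$ decreases the energy by precisely $D_i := 2\log Q_{k(i)} = -\log(e^{-2r_{k(i)}} + \mu_{k(i)}^2)$, and that the run terminates once the energy has fallen to the stable floor $\le H$ of \eqref{eq:enbound}. For $\delta<3/4$ one has $e^{-2r_k}<\delta$ and $\mu_k^2\le 1/4$, so the decrement is \emph{deterministically} bounded below by $-\log(\delta+\tfrac14)>0$; this is the classical argument and already gives an $O(E)$ bound. The whole difficulty of the optimal case $\delta=3/4$ is that this deterministic lower bound degenerates to $-\log 1=0$, so the crux is to recover a positive \emph{expected} decrement from the randomness of the resampled $\mu_{k(i)}$.

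The key step, which I would establish first, is the drift estimate $\mathbb{E}[D_i \mid \mathcal{F}_{i-1}]\ge c$ for an absolute constant $c>0$, valid even at $\delta=3/4$. Since LLL-SP resamples $\mu_{k(i)}$ afresh and uniformly on $[-\tfrac12,\tfrac12]$ at every modification of site $k(i)$, the value used at the $i$-th topple is independent of the past and of $r_{k(i)}$ (under the sequential or random choice rule); combined with $e^{-2r_{k(i)}}<e^{-2T}=3/4$ this yields the pointwise bound $D_i\ge -\log(\tfrac34+\mu_{k(i)}^2)$, hence
\begin{equation*}
\mathbb{E}[D_i \mid \mathcal{F}_{i-1}]\ge -\int_{-1/2}^{1/2}\log\!\left(\tfrac34+\mu^2\right)d\mu = 2-\frac{\pi}{\sqrt 3}=:c\approx 0.186>0.
\end{equation*}
This one inequality is the heart of the theorem: it is exactly the averaging over the freshly sampled $\mu$ that rescues the boundary parameter, for which no analogous statement is known for genuine LLL --- precisely why the optimal-LLL problem remains open there. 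I expect this computation to be the main obstacle, in the sense that the entire theorem hinges on the integral coming out strictly positive.

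With a uniform positive drift in hand, the remainder is a concentration-plus-floor argument. I would first record the exponential tail $\mathrm{Prob}(D_i>t\mid\mathcal{F}_{i-1})\le \mathrm{Prob}(|\mu_{k(i)}|<e^{-t})\le 2e^{-t}$, so that with probability $1-O(E^{-1})$ none of the first $O(E)$ topples has $D_i>4\log E$. Let $\sigma$ be the first step violating this; up to $\sigma$ each toppled height lands above $T-4\log E$, and since a height decreases only when it is itself toppled, no height --- and hence no energy --- ever drops below $-O(E)$ (here I use $E=E(r)=\Theta(n^3\log\max_i\|\mathbf{b}_i\|)$, which dominates this floor). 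I would then apply optional stopping to the supermartingale $E_i+c\,i$ stopped at $\tau\wedge\sigma\wedge N$, where $\tau$ is the termination time: the drift bound gives $c\,\mathbb{E}[\tau\wedge\sigma\wedge N]\le E_0-(\text{floor})=O(E)$, and Markov's inequality together with the tail estimate for $\sigma$ yields $\mathrm{Prob}(\tau>N)\le \eta$ for $N=O_\eta(E)$. The floor step is routine, but it is the one place where the absence of an underlying lattice (unlike genuine LLL, where discreteness bounds the energy automatically) forces the short detour through the tail bound.
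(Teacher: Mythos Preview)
Your proof is correct and shares the paper's essential insight --- that at $\delta=3/4$ the freshly resampled $\mu$ still yields a strictly positive expected energy decrement per topple --- but the execution differs, and yours is in fact the more careful of the two.

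The paper argues more crudely: fix a small $\varepsilon>0$, declare a topple ``good'' when $|\mu_{k(i)}|\le(1-\varepsilon)/2$ (an event of probability $1-\varepsilon$), observe that each good topple drops the energy by at least $d=-\log\bigl(3/4+(1-\varepsilon)^2/4\bigr)>0$, and then use a binomial estimate to conclude that among $10E/d$ topples at least $E/d$ are good with probability close to $1$, which is asserted to force termination. You instead compute the exact drift $c=2-\pi/\sqrt3$ and run optional stopping on the supermartingale $E_i+ci$.

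One genuine advantage of your route: you explicitly confront the floor issue. In LLL-SP there is no underlying lattice, so the energy is not bounded below a priori, and ``total energy drop $\ge E$'' does not by itself imply termination (some $r_j$ could be very negative while another $r_k>T$). The paper's proof tacitly assumes this implication; your truncation at $\sigma$ via the tail bound on $D_i$ repairs it. Two small corrections: the tail should read $\mathrm{Prob}(D_i>t)\le 2e^{-t/2}$ rather than $2e^{-t}$ (since $D_i>t$ forces $|\mu|<e^{-t/2}$), and your claim that $\mu_{k(i)}$ is fresh and independent of the past needs the caveat that the first topple at each site may use the input value of $\mu$ --- but this affects at most $n-1$ steps and is harmless (the paper makes the same tacit assumption in its i.i.d.\ claim).
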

\begin{proof}

Write $\mu$ for the random variable uniformly distributed in $[0, 1/2]$. In the case $\delta < 3/4$, the complexity bound of LLL is established with the observation that, with each swap, the energy $E$ decreases by at least $c := \log(\delta + 1/4)^{-1} > 0$, and thus the algorithm must terminate within $E/c$ steps. Similarly, in the case $\delta = 3/4$, we try to show that the minimum change of energy $\log(\delta + \mu^2)^{-1}$ is strictly bounded away from zero almost all the time.

(If $I$ was the increment for a given toppling operation, it is easy to show that the energy decreases by $2I$ after such a step.)

Choose a small $\varepsilon > 0$, and let $p = \mathrm{Prob}(\mu \leq (1-\varepsilon)/2) = 1- \varepsilon$. Let $d = \log(3/4 + p^2/4)^{-1}$, which is the minimum possible change in energy provided $\mu \leq (1-\varepsilon)/2$. Now take $10E/d$ samples $\mu_1, \mu_2, \ldots$ of $\mu$ (there is nothing special about the constant $10$ here). If at least $E/d$ of those samples are less than $(1-\varepsilon)/2$, LLL-SP would terminate. Proving that this probability is arbitrarily close to $1$ is now a simple exercise with the binomial distribution.

\end{proof}

Observe that the above proof carries over to the case of LLL assuming Conjecture \ref{conj:mu}; the compactness condition on the $\mu_{k(i)}$ distributions allows control on the probability that they are all simultaneously bounded away from $1/2(1-\varepsilon)$.

\section{Finite-size scaling theory}

Finite-size scaling (FSS) is a theory in statistical physics used to study critical phenomena. Such phenomena are often studied via models on finite graphs and then by analyzing the quantity $\chi$ of interest as the system size $L$ --- the number of vertices of the graph --- goes to infinity. Roughly speaking, FSS asserts that, upon a proper rescaling of the variables, $\chi$ becomes nearly independent of $L$ for $L \gg 0$. FSS also provides a description of this asymptotic behavior of $\chi$ as $L \rightarrow \infty$.

For sandpile models, FSS implies asymptotic formulas that would be particularly interesting if they also applied to the LLL algorithm, as discussed in Section 1.2 above. Although it would be inappropriate to say ``apply FSS to LLL,'' as LLL has no underlying critical phenomenon, the formulas themselves, isolated from the context of the original theory, could certainly be tried. We ran a long experiment on LLL that is analogous to the one in Section III of Grassberger, Dhar, and Mohanty \cite{GDM16}, in which the authors employ FSS to study the Oslo model, a sandpile model with entirely different toppling rule than the ones we have considered so far. This section presents the results from this experiment.

\subsection{A brief introduction to FSS}

We start with a brief introduction to FSS and its predictions that are pertinent to our work. For readers who are unfamiliar with physics but wish to gain some quick basic knowledge, we recommend browsing the theory of one- and two-dimensional Ising models. Also see Section III of \cite{GDM16}, which states the formulas \eqref{eq:1st}-\eqref{eq:3rd} that we will introduce below. For more serious general treatises on FSS, see \cite{C88} or \cite{G18}.

In the theory of critical phase transition in physics --- e.g. the transition in a magnetic material from a magnetized to unmagnetized state --- one finds that the quantity $\chi$ of interest, for example the magnetic susceptibility, diverges near the critical point, or critical temperature; see Figure \ref{fig:fss1}. Furthermore, this divergence is often described by a power law, e.g.
\begin{equation*}
\chi \sim \frac{C}{(\epsilon-\epsilon_\mathrm{crit})^\gamma} + \frac{C_1}{(\epsilon-\epsilon_\mathrm{crit})^{\gamma_1}} + \frac{C_2}{(\epsilon-\epsilon_\mathrm{crit})^{\gamma_2}} + \ldots \mbox{with $\gamma > \gamma_1 > \gamma_2 \ldots$}
\end{equation*}
where $\epsilon = \epsilon(T)$ is an appropriate normalization of the temperature $T$, and $\epsilon_\mathrm{crit}$ is the normalized critical temperature.. The theory of critical phase transitions is a systematic understanding of these exponents and the relations between them, mainly by employing the apparatus of the renormalization group (see \cite{G18}).

However, this kind of divergence only occurs for systems that are much larger than the size of atoms. For equilibrium systems such as the Ising model, this is reflected in the partition function $Z(L, \beta)$ of the system, where $L$ is the system size and $\beta$ is the inverse temperature. For any finite $L$, the partition function is a smooth function of $\beta$, and there are no singularities, hence no phase transitions. In practice, if the system has a large but finite size $L$, the singularities are ``rounded off' by an amount that decreases as $L$ becomes larger, as illustrated on the left side of Figure \ref{fig:fss2}.

\begin{figure}
\includegraphics[scale=0.4]{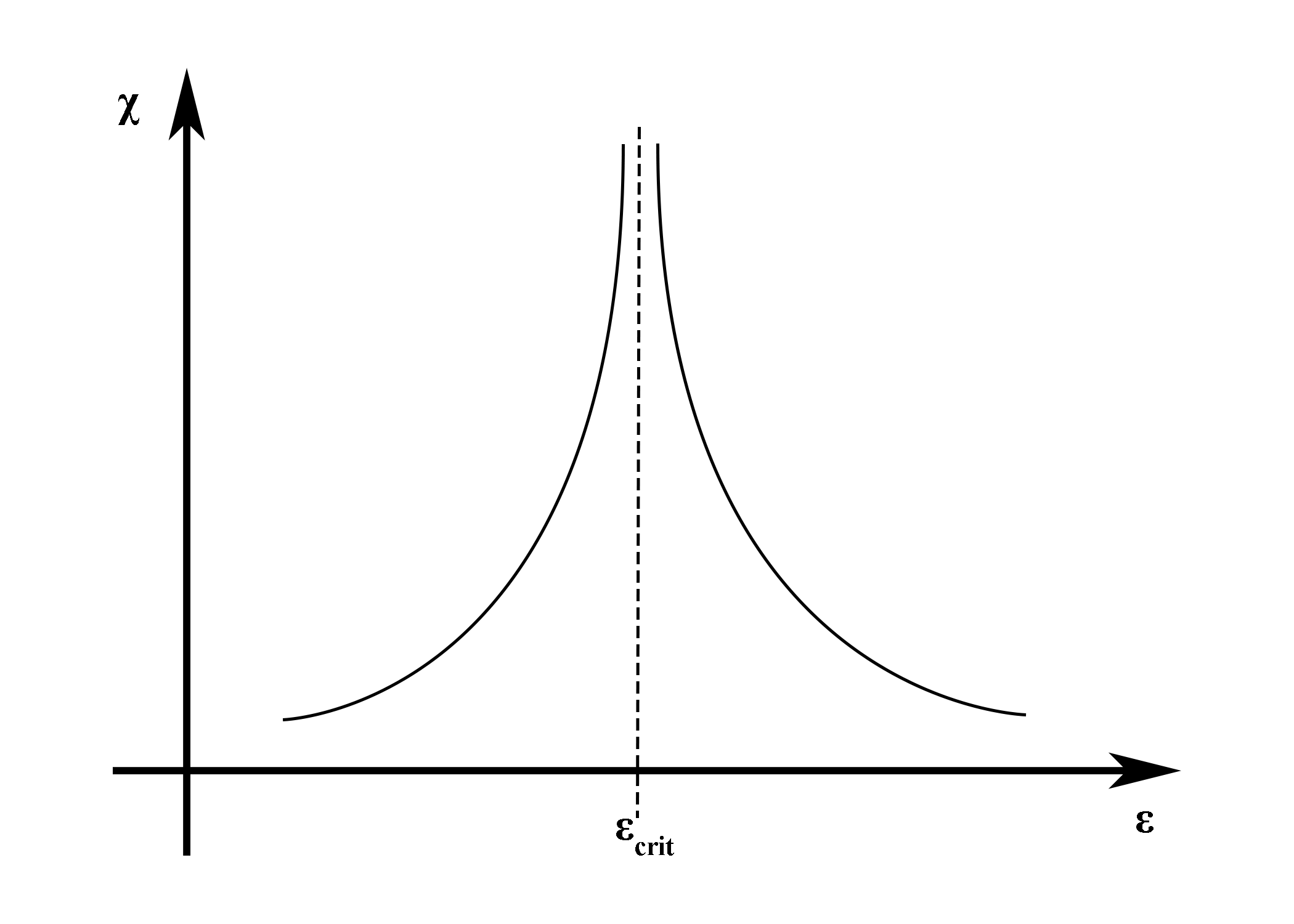}
\caption{$\chi$ (when $L = \infty$) as a function of normalized temperature $\epsilon$, diverging near $\epsilon_\mathrm{crit}$.} \label{fig:fss1}
\end{figure}

\begin{figure}
\includegraphics[scale=0.4]{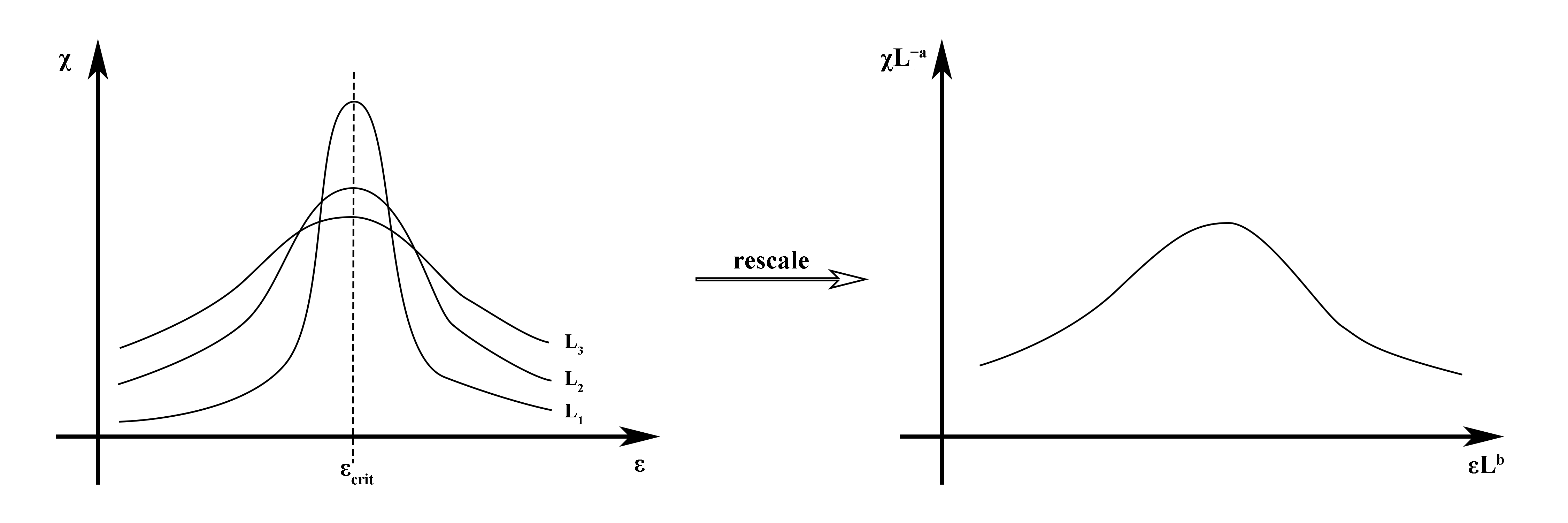}
\caption{Left: $\chi(L, \epsilon)$ for different system sizes $L_1 > L_2 > L_3$. Right: upon a suitable scaling of the coordinates, $\chi$ becomes nearly identical for any $L$.} \label{fig:fss2}
\end{figure}

Remarkably, it is found that these curves for $\chi(L, T)$ for different $L$ near the critical point can be made to collapse on each other, by scaling both $x$ and $y$-axes by factors depending on $L$, so that one has
\begin{equation*}
\chi(L,T) \sim L^af((\epsilon-\epsilon_\mathrm{crit}) L^b),
\end{equation*}
for some function $f$ and constants $a,b$ --- see Figure \ref{fig:fss2}. This scaling collapse is called the \emph{finite size scaling}. In addition, for each $\epsilon$ away from $\epsilon_\mathrm{crit}$, $\chi$ converges to a finite value as $L \rightarrow \infty$; from this it must be that
\begin{equation*}
f(x) \sim \frac{1}{x^{a/b}} \mbox{ for $x$ near $\infty$.}
\end{equation*}
Hence, for each $\epsilon \neq \epsilon_\mathrm{crit}$, $\chi \sim (\epsilon-\epsilon_\mathrm{crit})^{-a/b}$ as $L \rightarrow \infty$. On the other hand, by making $T$ approach the critical temperature at a rate such that $(\epsilon-\epsilon_\mathrm{crit}) L^b$ is large but constant, we obtain $\chi(L, \epsilon_{\mathrm{crit}}) \sim L^a$ for $L \gg 0$. These relations can be used to study $\chi(\infty, T)$ by looking at $\chi(L, T)$ for finite values of $L$, for example.

In non-equilibrium systems such as sandpile models, the temperature is no longer a parameter that an external observer controls; rather, as the dynamics unfolds, the system approaches the critical temperature on its own (hence the term \emph{self-organized criticality} (SOC) systems, as they are sometimes called). Therefore, the above story needs some tweaking, but similar statements hold. For sandpile models, one interprets $\epsilon = z_L$ and $\epsilon_\mathrm{crit} = z_c$, where $z_L = \mathbb{E}(z(r))$ is the average of $z(r):=(1/L)\sum r(i)$ taken over the steady state of size $L$ system, and $z_c = \lim_{L \rightarrow \infty} z_L$ is the critical ``temperature.'' Then one has the relation
\begin{equation} \label{eq:1st}
z_c = z_L + \frac{C}{L^{\sigma}} + \mbox{(smaller errors)}
\end{equation}
for some constants $C$ and $\sigma$, akin to what one would obtain by putting together the two relations $\chi \sim (\epsilon-\epsilon_\mathrm{crit})^{-a/b}$ and $\chi \sim L^a$ discussed earlier. Moreover, FSS also predicts that
\begin{equation} \label{eq:2nd}
\mathrm{Var}(z(r)) \sim L^{-2\sigma}
\end{equation}
with the same $\sigma$. In the literature, for each system, the letter $\sigma$ is reserved to denote the constant such that \eqref{eq:1st} or \eqref{eq:2nd} holds.

There also exist the FSS theory of boundary behavior --- see e.g. Diehl \cite{D97}. In the case of the Ising model, write $m(T)$ for the bulk magnetization at temperature $T$, and $m(i,T)$ for the mean magnetization at distance $i$ from the surface. Then, for the system size $L \gg 0$, there is a relation
\begin{equation*}
m(T) - m(i,T) \sim i^{-a}f((\epsilon - \epsilon_\mathrm{crit})^b i)
\end{equation*}
for some exponents $a$ and $b$, where $f(x) \sim \exp(-cx)$ for a constant $c>0$ and $x$ large. Similarly, for sandpile models, the average of the $i$-th pile $r(i)$ satisfies
\begin{equation} \label{eq:3rd}
z_c - \mathbb{E}(r(i)) \sim i^{-a_1} \mbox{ or } (L+1-i)^{-a_2},
\end{equation}
for some $a_1$ and $a_2$, depending on whether $i$ is closer to $1$ or $L$. For Abelian models, thanks to its inherent left-right symmetry, it can be argued theoretically and experimentally that $a_1 = a_2 = \sigma$. For non-Abelian models, it is possible that $a_1 \neq a_2$.

Recall that the root Hermite factor (RHF) of a configuration $r$ is defined as
\begin{equation}\label{eq:logrhf}
\log \mbox{RHF$(r)$} = \frac{1}{(L+1)^2}\sum_{i=1}^L (L+1-i)r(i).
\end{equation}

Write $y_L$ for the (empirical) average of the RHF of LLL in dimension $n = L+1$, and $y_c = \lim_{L \rightarrow \infty} y_L$. The analogous statements to \eqref{eq:1st} and \eqref{eq:2nd} for RHF then becomes
\begin{align}
& y_c = y_L + \frac{D}{L^\sigma} + \mbox{(smaller errors)} \tag{\ref{eq:1st}'} \label{eq:1st'} \\
& \mathrm{Var}(y_L) \sim L^{-2\sigma}. \tag{\ref{eq:2nd}'} \label{eq:2nd'}
\end{align}

\subsection{Design}

We ran extensive experiments on dimensions $100, 150, 200, 250, 300$, with at least 50,000 iterations for each dimension, to test the formulas \eqref{eq:1st}, \eqref{eq:1st'}, \eqref{eq:2nd}, \eqref{eq:2nd'}, \eqref{eq:3rd} on the LLL algorithm. It was quite a sizable experiment, involving more than 300 cores for over four months. Unlike in the previous sections, we use the original LLL here, with $\delta = 0.999$.

We tried a couple of different methods to generate random bases: the same method as in Section 2 above, with determinant $2^{10n}$ and also with determinant $2^{5n}$, and the knapsack-type bases. We found that they all yield the same results in the lower dimensions, so for dimensions $\geq 200$ we only used the knapsack-type bases with parameter $20n$, which are $n \times (n+1)$ matrices of form
\begin{equation*}
\begin{pmatrix}
x_1 & 1 & & &  \\
x_2 & 0 & 1 & &  \\
\vdots & \vdots & \vdots & \ddots & \\
x_n & 0 & \cdots & 0 & 1
\end{pmatrix}
\end{equation*}
where $x_1, \ldots, x_n$ are integers sampled from $[0, 2^{20n})$ uniformly.

\subsection{Average and variance of RHF}

Table \ref{table:1st} (graphically depicted in Figure \ref{fig:1st}) summarizes our data on the averages of $z_L$ and $y_L$. It demonstrates that our data fits very well with \eqref{eq:1st} and \eqref{eq:1st'}, with $\sigma = 0.75$. Accordingly, we obtain the numerical estimates
\begin{equation} \label{eq:zy_pred1}
z_L \approx 0.0448 - 0.194L^{-3/4}, y_L \approx 0.0224 - 0.09L^{-3/4},
\end{equation}
and thus
\begin{equation} \label{eq:rhf_extrapolate1}
\mathrm{RHF_L} \approx \exp(0.0224 - 0.09L^{-3/4}) \rightarrow 1.02265\ldots \mbox{ as $L \rightarrow \infty$},
\end{equation}
which is close but slightly higher than the ``1.02.''

\begin{table}[]
\begin{tabular}{|l|lllll|}
\hline
dim($=L+1$)       & \multicolumn{1}{c}{100} & \multicolumn{1}{c}{150} & \multicolumn{1}{c}{200} & \multicolumn{1}{c}{250} & \multicolumn{1}{c|}{300} \\ \hline
$z_L$             & 0.03866             & 0.04028             & 0.04115             & 0.04172             & 0.04211              \\ \hline
$z_L - CL^{-\sigma}$ & 0.04479             & 0.04480             & 0.04480             & 0.04480             & 0.04480               \\ \hline
$y_L$             & 0.01957             & 0.02032             & 0.02072              & 0.02098             & 0.02116              \\ \hline
$y_L - DL^{-\sigma}$ & 0.02242             & 0.02242            & 0.02241             & 0.02241             & 0.02240              \\ \hline
\end{tabular}
\caption{Results on $z_L$ and $y_L = \mathbb{E}(\log\mathrm{RHF})$, with $\sigma = 3/4$, $C = -0.194$ and $D = -0.09$.}
\label{table:1st}
\end{table}

\begin{figure}
\includegraphics[scale=0.5]{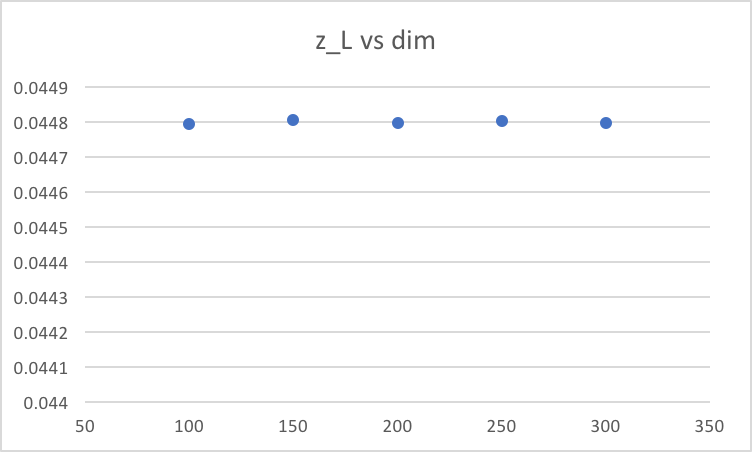}
\includegraphics[scale=0.5]{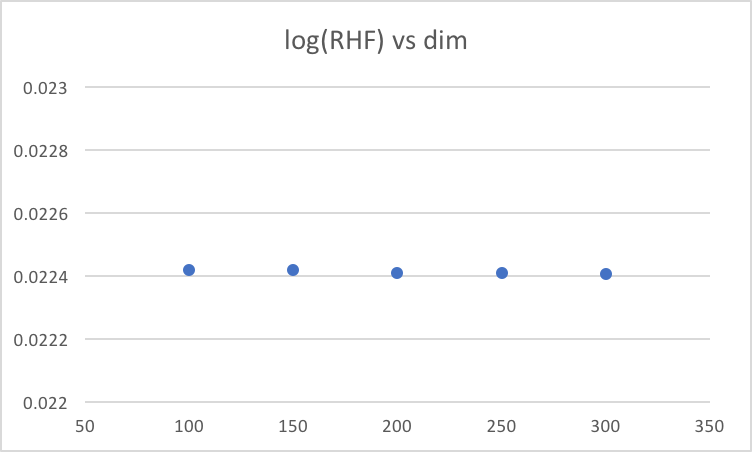}
\caption{Left: dimension versus $z_L - 0.194L^{-3/4}$. Right: dimension versus $y_L - 0.09L^{-3/4}$.}
\label{fig:1st}
\end{figure}

Table \ref{table:2nd} and Figure \ref{fig:2nd} show our data on the variances of $z_L$ and $y_L$. They also fit \eqref{eq:2nd} and \eqref{eq:2nd'} quite well, with the same $\sigma = 0.75$, though to a slightly lesser extent.

\begin{table}[]
\begin{tabular}{|l|lllll|}
\hline
dim($=L+1$)           & \multicolumn{1}{c}{100} & \multicolumn{1}{c}{150} & \multicolumn{1}{c}{200} & \multicolumn{1}{c}{250} & \multicolumn{1}{c|}{300} \\ \hline
$V(z_L)$              & $2.24 \times 10^{-6}$            & $1.21 \times 10^{-6}$             & $7.84\times 10^{-7}$             & $5.62 \times 10^{-7}$             & $4.21\times 10^{-7}$              \\ \hline
$V(z_L)/L^{-2\sigma}$ & 0.00224             & 0.00222             & 0.00222             & 0.00222             & 0.00219               \\ \hline
$V(y_L)$              & $1.05 \times 10^{-6}$              & $5.44\times10^{-7}$             & $3.49\times10^{-7}$             & $2.45\times10^{-7}$             & $1.84\times10^{-7}$              \\ \hline
$V(y_L)/L^{-2\sigma}$ & 0.00105               & 0.00100             & 0.00099             & 0.00097             & 0.00096              \\ \hline
\end{tabular}
\caption{Results on the variances of $z_L$ and $y_L$, with $\sigma = 3/4$.}
\label{table:2nd}
\end{table}

\begin{figure}
\includegraphics[scale=0.5]{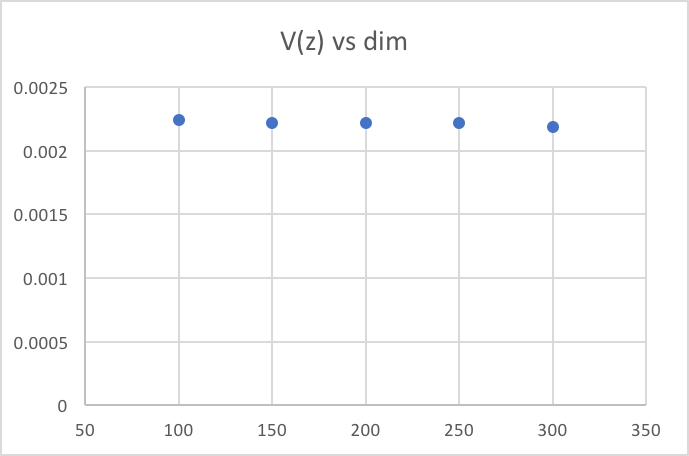}
\includegraphics[scale=0.5]{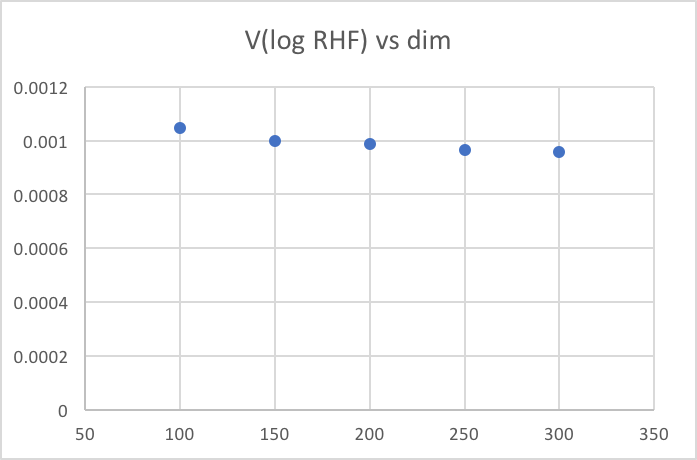}
\caption{Left: dimension versus $V(z_L)/L^{-1.5}$. Right: dimension versus $V(y_L)/L^{-1.5}$.}
\label{fig:2nd}
\end{figure}

\subsection{Boundary statistics}

Figures \ref{fig:3rd_f} and \ref{fig:3rd_b} present comparisons of our data with \eqref{eq:3rd}, with Figure \ref{fig:3rd_f} examining the left boundary (i.e. $i$ near $1$) and Figure \ref{fig:3rd_b} the right boundary (i.e. $i$ near $L$). Here we used $z_c = 0.448$, obtained in Section 3.1 above.

From Figure \ref{fig:3rd_b}, on the right boundary we do find that $z_c -\mathbb{E}(r(L-i)) \sim i^{-0.75}$ on the first $10$ points or so. However, Figure \ref{fig:3rd_f}, and  also the rest of the points on Figure \ref{fig:3rd_b}, makes matters more subtle: it appears that, on the left end, and for many points on the right end, $z_c - \mathbb{E}(r(i)) \sim i^{-1.05}$ appears to be the correct observation.

\begin{figure}
\includegraphics[scale=0.5]{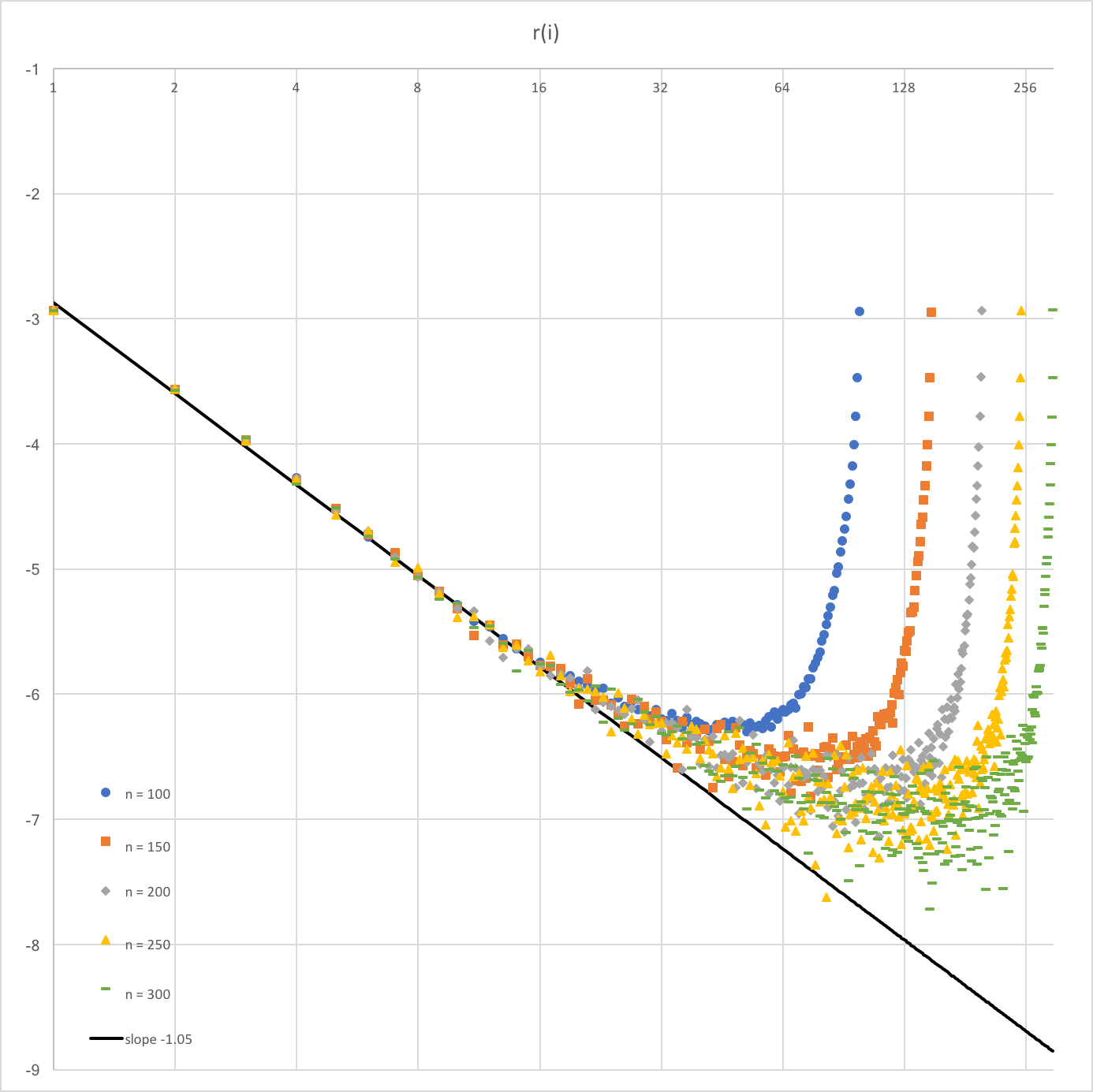}
\caption{$i$ versus $\log(z_c - \mathbb{E}(r(i)))$.}
\label{fig:3rd_f}
\end{figure}

\begin{figure}
\includegraphics[scale=0.5]{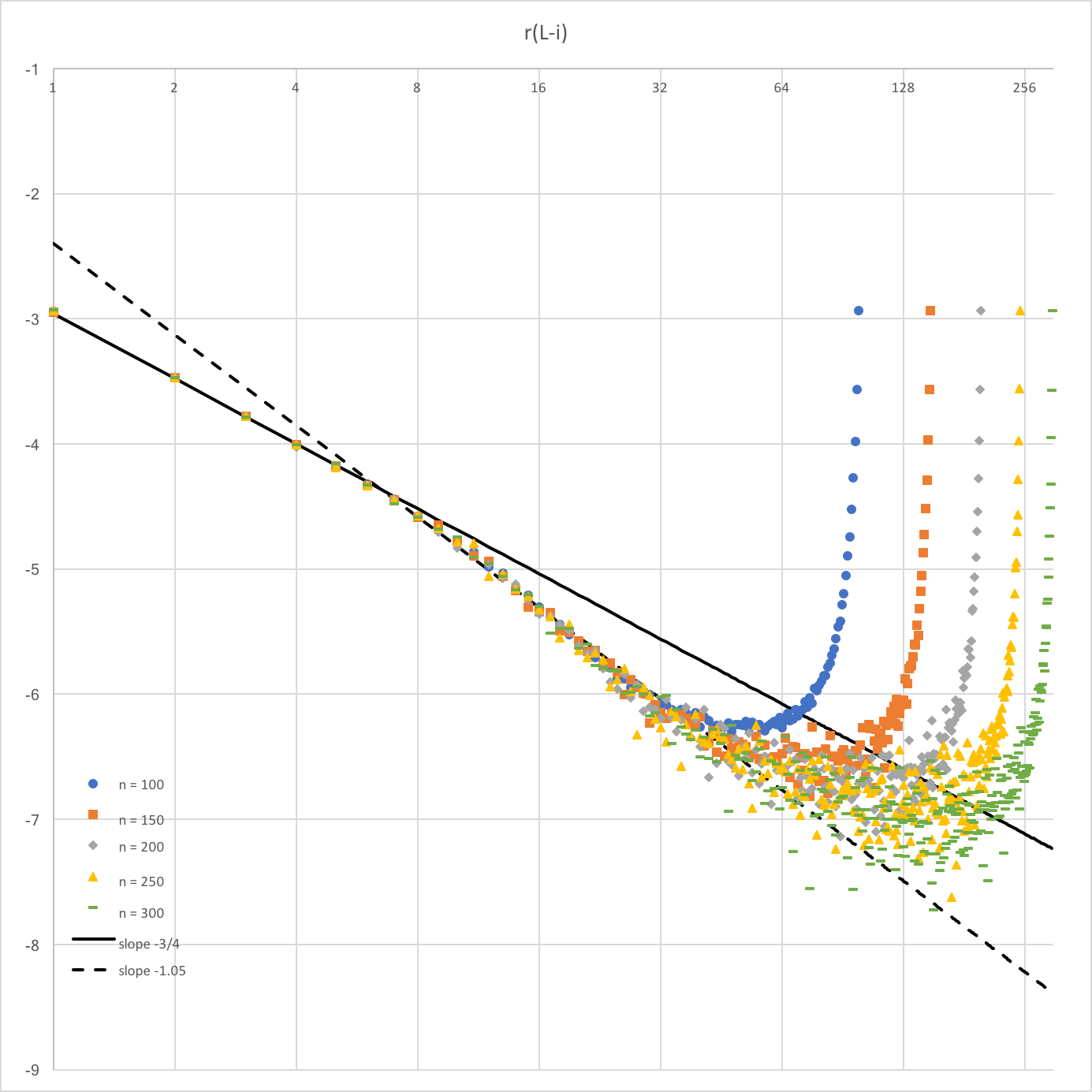}
\caption{$i$ versus $\log(z_c - \mathbb{E}(r(L-i)))$.}
\label{fig:3rd_b}
\end{figure}

\subsection{Summary and discussions}

Typically in physics, experiments of this kind are carried out up to $L$ close to a million, if not more. An experiment of such magnitude is clearly infeasible for lattice reduction, and hence we have been severely constrained in our experiments from the physical perspective. In addition, our estimates of the critical exponent $\sigma$ and other constants very likely leave much room for improvement, by employing more extensive and elaborate numerical techniques. Despite these limitations, our experiments reveal some clear patterns in the empirical output statistics of LLL, robustly described by formulas from statistical mechanics.

We obtain two particularly notable implications. First, the folklore number ``1.02'' is not too far from the LLL behavior in the limit. One could reasonably suspect that the average-worst case RHF gap is only a peculiarity in the low dimensions, and that it would disappear in the dimension limit, citing the result of \cite{KV17} for instance. But we found evidence that the gap is actually a real phenomenon. Second, Figures \ref{fig:3rd_f} and \ref{fig:3rd_b} provide neat formulas for the average output statistics of LLL, via an appropriate normalization of graphs such as Figure \ref{fig:output}. This is a vast refinement of GSA, at least for the LLL algorithm. Of course, the same set of experiments can be carried out for BKZ, and our pilot experiments with BKZ-20 look promising. This result will appear in a forthcoming paper.

It remains a mystery as to how to explain the boundary phenomenon that we observed here. It is not entirely surprising for non-Abelian models to behave differently on the left and right ends, but the particular shape of Figure \ref{fig:3rd_b} is not seen often even in physics, to the best of our knowledge. It is probable that the more familiar pattern may emerge with more data.

\end{document}